\newtheorem{obs}[theorem]{Observation}
\begin{document}
	\title{Steiner Tree in $k$-star Caterpillar Convex Bipartite Graphs - A Dichotomy\thanks{This work is partially supported by DST-ECRA Project—
			ECR/2017/001442.}}
	\author{Aneesh D H	\inst{1}\and
		A.Mohanapriya \inst{1}\and
		P.Renjith \inst{2}\and
		N.Sadagopan \inst{1}}
	\institute{Indian Institute of Information Technology, Design and Manufacturing,
		Kancheepuram, Chennai. \and
		Indian Institute of Information Technology, Design and Manufacturing, Kurnool.
		\\\email{\{coe16b001,coe19d003\}@iiitdm.ac.in,renjith@iiitk.ac.in,
			sadagopan@iiitdm.ac.in}}		
	\maketitle
	\begin{abstract}
The class of $k$-star caterpillar convex bipartite graphs generalizes the class of convex bipartite graphs.  For a bipartite graph with partitions $X$ and $Y$, we associate a $k$-star caterpillar on $X$ such that for each vertex in $Y$, its neighborhood induces a tree.   The $k$-star caterpillar on $X$ is imaginary and if the imaginary structure is a path ($0$-star caterpillar), then it is the class of convex bipartite graphs.  The minimum Steiner tree problem (STREE) is defined as follows: given a
		connected graph $G=(V,E)$ and a subset of vertices $R \subseteq V(G)$,  the
		objective is to find a minimum cardinality set $S \subseteq V(G)$ such that the
		set $R \cup S$ induces a connected subgraph.  STREE is known to be NP-complete
		on general graphs as well as for special graph classes such as chordal graphs,
		bipartite graphs, and chordal bipartite graphs.   The complexity of STREE in
		convex bipartite graphs, which is a popular subclass of chordal bipartite
		graphs, is open.  In this paper, we introduce $k$-star caterpillar convex bipartite graphs, and show that STREE is NP-complete for $1$-star caterpillar convex bipartite graphs and polynomial-time solvable for $0$-star caterpillar convex bipartite graphs (also known as convex bipartite graphs).   In \cite{muller1987np}, it is shown that STREE in chordal bipartite graphs is NP-complete.   A close look at the reduction instances reveal that the instances are $3$-star caterpillar convex bipartite graphs, and in this paper, we strengthen the result of \cite{muller1987np}.
		\\\textbf{keywords:} $k$-star caterpillar convex bipartite graphs, Steiner tree,  chordal bipartite graphs, convex bipartite graphs.
	\end{abstract}  
	\section{Introduction}
	Many classical subset problems such as vertex cover, independent set and 
	dominating set, have attracted the researchers in the field of theory and
	computing, examining the following aspects: (i) to know whether the problem is
	polynomial-time solvable or NP-complete on general graphs (ii) the status of the
	problem in well-known special graph classes such as chordal graphs, and
	bipartite graphs (iii) if NP-complete on general graphs, then investigate the
	instances generated out of the polynomial-time reduction in an attempt to
	identify easy vs hard instances. (iv) if NP-complete on general graphs,
	investigate the problem from a parameterized complexity perspective with a
	suitable parameter of interest.\\
	The minimum Steiner tree problem (STREE) \cite{garey1979guide} is a classical
	subset problem.  Given an unweighted connected graph $G$ and $R \subseteq V(G)$,
	the problem asks for a minimum cardinality set $S \subset V(G)$ such that the
	set $R \cup S$ induces a connected subgraph.  Subsequently using traversals
	algorithm such as breadth first search or depth first search, one can obtain a
	tree on $R\cup S$, such a tree is known as the Steiner tree for the terminal set
	$R$.   The sets $R$ and $S$ are known as the terminal set and the Steiner set,
	respectively, in the literature.   Interestingly, STREE has applications in road
	construction \cite{grigoreva2015use}, communication networks
	\cite{hwang1995steiner}, computer networks and many more \cite{miehle1958link}. 
	Two of the special cases of STREE are (i) $|R|=2$; in this case, solving STREE
	is equivalent to solving the shortest path problem between the vertices in $R$ 
	(ii) $|R|=|V(G)|$; solving this is equivalent to solving the minimum spanning
	tree problem assuming all edge weights are one. \\
	On the complexity front, STREE is NP-complete on general, and bipartite graphs
	as there is a polynomial-time reduction from the Exact-3-Cover problem
	\cite{hwang1995steiner}.   Further, it is NP-complete on bipartite graphs
	\cite{hakimi1971steiner}, split graphs \cite{white1985steiner}, and chordal
	bipartite graphs \cite{muller1987np}.   For a computational problem known to be
	NP-complete on a graph class, the two possible directions for further research
	are:  (i) study the complexity of the problem in some well-known subclasses of
	the graph class (ii) a closer look at the reduction to understand easy vs hard
	instances.    As part of this paper, we shall take the first direction and
	investigate the complexity of STREE in a subclass of chordal bipartite graphs.  
	The two well-known subclasses of chordal bipartite graphs are convex bipartite
	graphs \cite{damaschke1990domination}, and bipartite distance hereditary graphs
	\cite{d1988distance}.  A bipartite graph $G$ with bipartition $(X,Y)$ is convex, if $X$ can be arranged on a line such that for
	every $y$ in $Y$, its neighborhood consists of consecutive vertices of $X$.
	Interestingly, STREE is polynomial-time solvable in bipartite distance
	hereditary graphs \cite{d1988distance}, however, to the best of our knowledge,
	the complexity of STREE in convex bipartite graphs is open.  In this paper, we
	answer this question and present a polynomial-time algorithm. \\
There is another motivation to this paper. Tree convex bipartite graphs generalize convex bipartite graphs by associating a tree, instead of a path, with one set of the vertices, such that for every vertex in another set, the neighborhood of this vertex induces a subtree. Note that the associated tree or path is imaginary. In this paper, we consider the associated tree to be a special tree, namely {\em $k$-star caterpillar}. We observe that $0$-star caterpillar convex bipartite graphs are the well-known convex bipartite graphs, and hence $k$-star caterpillar convex bipartite graphs generalize the class of convex bipartite graphs. We show that STREE is polynomial-time solvable in $0$-star caterpillar convex bipartite graphs and NP-complete for $1$-star caterpillar convex bipartite graphs. Thus we obtain a P vs NPC dichotomy for STREE in $k$-star caterpillar convex bipartite graphs. \\
There is yet another motivation to this paper.   In \cite{muller1987np}, it is shown that STREE in chordal bipartite graphs is NP-complete.    We observe that the  reduction instances are $3$-star caterpillar convex bipartite graphs, and hence STREE is NP-complete for $3$-star caterpillar convex bipartite graphs.   It is natural to look at a subclass of $3$-star caterpillar convex bipartite graphs where STREE is polynomial-time solvable, and if possible, strengthen the polynomial-time reduction so that we obtain a boundary between P vs NPC instances of STREE in $k$-star caterpillar convex bipartite graphs.   In this paper, we strengthen the result of \cite{muller1987np} and show that STREE is NP-complete for $1$-star caterpillar convex bipartite graphs.
\\Since STREE is a well-studied problem,  we shall highlight some of the important
	results.  On the polynomial time front, STREE is polynomial-time solvable in
	interval graphs \cite{white1985steiner,ramalingam1988unified}, cographs
	\cite{colbourn1990permutation} and $K_{1,4}$-free split graphs
	\cite{renjith2020steiner}.  From the parameterized perspective, STREE is
	fixed-parameter tractable, if the parameter is $|R|$ \cite{dreyfus1971steiner}
	and $W[2]$-hard, if the parameter is $|S|$ \cite{dom2009incompressibility}.  
	The study of STREE is useful in determining the complexity of related problems
	such as connected domination and maximum leaf spanning tree, as observed in
	\cite{white1985steiner,ramalingam1988unified,renjith2020steiner}\\
	{\bf Our Results:} We show that STREE in $1$-star caterpillar convex bipartite graphs is NP-complete by presenting a deterministic polynomial-time reduction from the vertex cover problem.   On $0$-star caterpillar convex bipartite graphs, we show that STREE is in class P.    To present polynomial-time result on $0$-star caterpillar convex bipartite graphs (also known as convex bipartite graphs), the input instances of convex bipartite graphs are
	partitioned into five sets based on the terminal set; $R=X$, $R \subset X$,
	$R=Y$, $R \subset Y$, and $R \cap X \neq \emptyset$ and $R \cap Y \neq
	\emptyset$.   For the first three cases, we present greedy algorithms, and a
	dynamic programming based solution for the other two cases.  
	\section{Graph Preliminaries}
	All graphs considered here are simple, undirected, connected, unweighted graphs.
	We follow the definitions and notation from
	\cite{west2001introduction,golumbic2004algorithmic}.  For a graph $G$, let
	$V(G)$ denote the vertex set and $E(G)$ denote the edge set.  The edge set
	$E(G)=\{\{u,v\} ~|~u$ is adjacent to $v$ in $G$ $\}$.  The open neighborhood of
	a vertex $v$ in $G$ is denoted as $N_G(v) = \{u ~|~ \{u, v\} \in E(G)\}$ and we
	denote the closed neighborhood of a vertex $v$ in $G$ as $N_G[v]=N_G(v)\cup
	\{v\}$.  The degree of a vertex $v$ in $G$ is $d_G(v) = |N_G(v)|$.  We denote by
	$\delta(G) = \min \{d_G(v) ~|~ v \in V(G)\}$.  A vertex $v$ is said to be {\em
		pendant}, if $d_G(v)=1$.  For $V'\subseteq V(G)$, the graph induced on $V'$ is
	represented as $G[V']$.  A bipartite graph is chordal bipartite, if every cycle
	of length strictly greater than four has a chord.  A bipartite graph $G(X,Y)$
	partitioned into $X$ and $Y$ is a convex bipartite graph, if there is an
	ordering of $X=(x_1,\ldots,x_m)$ such that for all $y \in Y$, $N_G(y)$ is
	consecutive with respect to the ordering of $X$, and $G$ is said to have
	convexity with respect to $X$.  For $X=(x_1,\ldots,x_m)$, when we say $x_i \prec
	x_j$, we mean that $x_i$ appears before $x_j$ in the ordering.  Similarly, one
	can define convexity with respect to $Y$.  Convex bipartite graphs can also be
	interpreted as follows:  there exists an imaginary path on $X$ and for each $y
	\in Y$, $N_G(y)$ is an interval (subpath in the imaginary path) in $X$.   For
	every vertex $y \in Y$, $l(y)$ is the least vertex of $X$ adjacent to $y$ and
	$r(y)$ is the greatest vertex of $X$ adjacent to $y$.  We define $T(x_i)$ and a
	vertex $w(x_i) \in N(x_i)$ as follows:  for $i \ge 1$, $T(x_i)=\{y ~|~ y \in
	N(x_i)$, and $r(y)$ is the maximum$\}$, and $w(x_i)$ is an arbitrary vertex of
	$T(x_i)$.   A $k$-star caterpillar, $k \geq 1$, is a tree $T$ where $V(T)=\{x_1,\ldots,x_p \} \cup \{x_{i1},\ldots,x_{ik}\}, 1 \leq i \leq p$ and $E(T)=\{\{x_j,x_{j+1}\} ~|~ 1 \leq j \leq p-1 \} \cup \{
\{x_j,x_{jl}\} ~|~ 1 \leq j \leq p, 1 \leq l \leq k \}$.  A $0$-star caterpillar is a tree $T$ where $V(T)=\{x_1,\ldots,x_p \}$ and $E(T)=\{\{x_j,x_{j+1}\} ~|~ 1 \leq j \leq p-1 \}$.   Equivalently, $0$-star caterpillar is a path on $p$ vertices. \\
	\section{A Polynomial-time Algorithm for STREE in Convex Bipartite graphs}
	We shall present our results by considering possible values for the terminal
	set; towards this, we partition the inputs into five sets.  Throughout this
	paper, we assume convexity on $X$.  We shall next present the solution to STREE
	when $R=X$.
	\subsection{STREE with $R=X$}
	We present a greedy algorithm (Algorithm \ref{algo1}) to solve this case.  Note
	that if $|X|=1$, then the Steiner set is empty.  For $|X|\geq 2$,  using
	convexity on $X$, we identify the vertex $y$ adjacent to $x_1$ such that $r(y)$
	is maximum and we continue from $r(y)$.  Interestingly, this greedy approach is
	indeed optimum, which we establish in this section through a  classical
	cut-and-paste argument \cite{cormen2009introduction}. Let $z_1=x_1$,
	$z_{i+1}=r(w(z_i)), i\geq 1$.
	\begin{algorithm}[H]
		\caption{{\em STREE with $R=X$}}
		\label{algo1}
		\begin{algorithmic}[1]
			\State{{\tt Input:}  A connected convex bipartite graph $G$ with $R=X$.}
			\State{Initialize $i=1$, $z_1=x_1$, $z=r(w(z_1))$}
			\State{Initialize Steiner set $S=\{w(z_1)\}$}
			\While{$z \ne x_m$}
			\State{Update $z_{i+1}=r(w(z_i))$, and $S=S\cup \{w(z_{i+1})\}$}
			\State{$z=r(w(z_{i+1}))$, and $i=i+1$}
			\EndWhile
		\end{algorithmic}
	\end{algorithm}
	\begin{figure}[H]
		\begin{center}
			\begin{tikzpicture}
				\tikzstyle {vertex}=[circle, inner sep=0pt, minimum
				size=0.75ex,draw=black,fill=black]
				\node [vertex](c)[label=left:$x_4$] at (0,1){};
				\node [vertex](d)[label=right:$y_4$] at (2,1){};
				\node [vertex](e)[label=left:$x_3 \text{$=$} r(w(z_1))$] at (0,2){};
				\node [vertex](f)[label=right:$y_3$] at (2,2){};
				\node [vertex](g)[label=left:$x_2$] at (0,3){};
				\node [vertex](h)[label=right:$y_2\text{$=$} w(z_1)$] at (2,3){};
				\node [vertex](i)[label=left:$x_1\text{$=$} z_1$] at (0,4){};
				\node [vertex](j)[label=right:$y_1$] at (2,4){};
				
				\draw[] (i)--(j);
				\draw[] (c)--(d);
				\draw[] (c)--(f);
				\draw[] (e)--(f);
				\draw[] (e)--(d);
				\draw[] (h)--(e);
				\draw[] (g)--(f);
				\draw[] (g)--(h);
				\draw[] (g)--(j);
				\draw[] (i)--(h);
			\end{tikzpicture}
		\end{center}
		\caption{An illustration for the case $R=X$}\label{r=x}
	\end{figure}
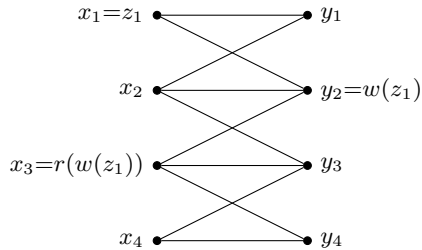
	\noindent An illustration for $R=X$ is given in Figure \ref{r=x} and its trace
	for Algorithm \ref{algo1} is given below.
	\\Note $R=\{x_1,x_2,x_3,x_4\}$.  As part of the initialization, we set
	$i=1,z_1=x_1,z=r(w(z_1))=x_3$, $S=\{y_2\}$.   During the first iteration,  we
	see that $z=x_3 \neq x_4$ is true. 
	Thus, $z_2=x_3 \mbox{ and } S=\{y_2\} \cup \{y_4\}$. Also, $z=x_4,~ i=3$.  Hence
	the solution output by our algorithm is, $S=\{y_2,y_4\}$.  Note that $\{y_1,
	y_3\}$ is also an optimal solution.
	\begin{obs}
		For Algorithm \ref{algo1}, there exists $k$ such that $r(w(z_k))=x_m$, and the
		Steiner set is $S=\{w(z_1),\ldots,w(z_k)\}$.  Thus, Algorithm \ref{algo1}
		terminates.
	\end{obs}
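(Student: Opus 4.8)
The statement bundles two assertions: a termination claim (there is an index $k$ with $r(w(z_k))=x_m$) and an exact description of the output $S$. The plan is to show that the iterates $z_1, z_2, \dots$ defined by $z_{i+1}=r(w(z_i))$ strictly advance along the ordering of $X$ until they reach $x_m$, which forces the while loop to exit, and then to read off $S$ directly from the bookkeeping in Algorithm \ref{algo1}.

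The heart of the argument is to prove $z_i \prec z_{i+1}$ whenever $z_i \neq x_m$. Since $w(z_i) \in T(z_i) \subseteq N(z_i)$, we have $z_i \in N(w(z_i))$, and therefore $z_i \preceq r(w(z_i)) = z_{i+1}$; the work is in ruling out equality. Here I would invoke connectivity of $G$. Writing $z_i = x_j$ with $j < m$, suppose for contradiction that no $y \in Y$ had its neighborhood interval covering both $x_j$ and $x_{j+1}$. Then, since each $N(y)$ is an interval of $X$, every $y$ would satisfy either $r(y) \preceq x_j$ or $l(y) \succeq x_{j+1}$; the former class of vertices has all its neighbors inside $\{x_1,\dots,x_j\}$ and the latter inside $\{x_{j+1},\dots,x_m\}$, so $G$ would split into two parts with no crossing edges, contradicting connectivity. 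Hence some $y \in N(z_i)$ has $r(y) \succeq x_{j+1} \succ z_i$, and by the maximality built into the choice of $w(z_i)$ we obtain $z_{i+1} = r(w(z_i)) \succeq r(y) \succ z_i$, as needed.

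Consequently $z_1 \prec z_2 \prec \cdots$ is a strictly increasing sequence of distinct elements of the finite set $X$, each bounded above by $x_m$ because $r(y) \preceq x_m$ for every $y \in Y$. A strictly increasing sequence in $\{x_1,\dots,x_m\}$ has at most $m$ terms, and by the previous paragraph it can fail to continue only once the current value equals $x_m$; thus there is a least index $k$ with $z_{k+1} = r(w(z_k)) = x_m$. This is precisely the guard condition of the while loop, so the loop performs finitely many iterations and the algorithm halts.

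Finally, the form of $S$ follows by tracking the updates: $S$ is initialized to $\{w(z_1)\}$, and each iteration that computes $z_{i+1}$ simultaneously inserts $w(z_{i+1})$, so at the moment the guard first detects $r(w(z_k)) = x_m$ we have $S = \{w(z_1),\dots,w(z_k)\}$, matching the claim. I expect the main obstacle to be the strictness step of the second paragraph, namely deriving from connectivity of a convex bipartite graph that consecutive vertices $x_j, x_{j+1}$ must share a common neighbor; once that is in hand the remainder is a routine finiteness-and-monotonicity argument together with a direct reading of the pseudocode.
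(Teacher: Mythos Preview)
Your argument is correct. The paper states this as an observation without proof, so there is nothing to compare against; you have supplied the natural justification---strict progress of the $z_i$ along the convex ordering (via connectivity) together with finiteness of $X$---and your reading of the bookkeeping in Algorithm~\ref{algo1} to conclude $S=\{w(z_1),\ldots,w(z_k)\}$ is accurate.
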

	\begin{theorem}
		Let $G$ be a convex bipartite graph. The set $S$ of Steiner vertices of $G$
		obtained from Algorithm \ref{algo1} is a minimum Steiner
		set.\label{algo1correctness}
	\end{theorem}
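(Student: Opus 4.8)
The plan is to recast the $R=X$ case as a one-dimensional interval-covering problem and then establish optimality of the greedy choices through a cut-and-paste (``greedy stays ahead'') argument. First I would observe that since $R=X$, any Steiner set is a subset of $Y$, and by convexity on $X$ each chosen $y$ covers exactly the consecutive block $[l(y),r(y)]$ of $X$. Because $G$ is bipartite, two chosen vertices of $Y$ can lie in the same component of $G[X\cup S]$ only if their blocks share a vertex of $X$; hence $S$ is a valid Steiner set if and only if the blocks $\{[l(y),r(y)]:y\in S\}$ cover all of $X$ and, when sorted by left endpoint, consecutive blocks overlap in at least one vertex of $X$. Feasibility of the set returned by Algorithm~\ref{algo1} is then immediate from the Observation: each $w(z_i)$ satisfies $l(w(z_i))\preceq z_i\preceq z_{i+1}=r(w(z_i))$, so its block contains $z_i$ and consecutive blocks overlap there, while the union runs from $z_1=x_1$ to $r(w(z_k))=x_m$.

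Next I would fix an arbitrary minimum Steiner set $S^{*}$ and aim to show $|S|\le|S^{*}|$. The key structural step is to argue that $S^{*}$ may be taken \emph{irredundant}: if the block of one vertex is contained in that of another, the contained vertex can be deleted without breaking coverage or connectivity (all of its $X$-neighbours are covered by the larger block), contradicting minimality. Writing $S^{*}=\{u_1,\dots,u_t\}$ sorted by left endpoint, irredundancy forces $l(u_1)\prec\cdots\prec l(u_t)$ and $r(u_1)\prec\cdots\prec r(u_t)$, and connectivity forces the overlap $l(u_{i+1})\preceq r(u_i)$ for every $i$ (otherwise the blocks split into two groups sharing no $X$-vertex, disconnecting $G[X\cup S^{*}]$). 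In particular $l(u_1)=x_1$, since $x_1$ must be covered.

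With this chain structure in hand, the cut-and-paste argument reduces to a short induction proving that greedy stays ahead, namely $z_{i+1}=r(w(z_i))\succeq r(u_i)$ for every $i$ with $1\le i\le\min(k,t)$. The base case holds because $u_1$ and $w(z_1)$ are both neighbours of $x_1$ and $w(z_1)$ is chosen with maximum $r$-value, so $r(w(z_1))\succeq r(u_1)$. For the inductive step, the hypothesis $z_{i+1}\succeq r(u_i)$ together with $l(u_{i+1})\preceq r(u_i)$ places $z_{i+1}$ inside the block of $u_{i+1}$ whenever $z_{i+1}\preceq r(u_{i+1})$, so $u_{i+1}$ is a neighbour of $z_{i+1}$ and the greedy choice yields $z_{i+2}\succeq r(u_{i+1})$; in the remaining case $z_{i+1}\succ r(u_{i+1})$ we already have $z_{i+2}\succeq z_{i+1}\succ r(u_{i+1})$. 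Applying the invariant at $i=t$ (if $t\le k$) gives $z_{t+1}\succeq r(u_t)=x_m$, so Algorithm~\ref{algo1} halts by step $t$ and $k\le t$; if instead $t>k$ then $k<t$ trivially. Since $S$ is feasible we also have $t=|S^{*}|\le|S|=k$, hence $k=t$ and $S$ is a minimum Steiner set.

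I expect the main obstacle to be the structural lemma of the second paragraph rather than the induction itself: one must rule out that a minimum solution uses nested or merely touching blocks, because the clean chain $l(u_{i+1})\preceq r(u_i)$ is exactly what feeds the inductive step. The delicate point is the distinction between blocks that merely \emph{touch} (so that $r(u_i)$ immediately precedes $l(u_{i+1})$ in the ordering of $X$, with no shared vertex) and blocks that genuinely \emph{overlap} (sharing a vertex of $X$); it is precisely here that the bipartite connectivity requirement, rather than mere set cover, does the work and forces the overlap needed for the argument to close.
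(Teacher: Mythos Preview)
Your proof is correct and takes a genuinely different route from the paper. The paper's argument is a classical \emph{exchange} (cut-and-paste) proof: it fixes an optimal set $S'$, represents $S$ and $S'$ as binary indicator vectors over a suitable ordering of $Y$, and inducts on the number of positions where the two vectors differ, at each step swapping one vertex of $S'$ for the corresponding greedy choice without increasing $|S'|$. Your argument is instead a \emph{greedy-stays-ahead} proof: after normalising an optimal $S^{*}$ to an irredundant chain $u_1,\ldots,u_t$, you show by induction on $i$ that the greedy frontier $z_{i+1}$ always dominates $r(u_i)$, and conclude that the greedy algorithm terminates no later than the optimum does. Both are standard templates for greedy optimality; yours is arguably the more transparent one here because the interval structure makes the ``stays ahead'' invariant immediate, whereas the paper's exchange step has to identify the right vertex $y_l\in S'$ to swap out and argue that feasibility survives the swap.

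One small imprecision worth tightening: the ``if and only if'' characterisation of feasibility in your first paragraph is not literally correct---a feasible $S$ need not have \emph{consecutive} blocks (in left-endpoint order) overlapping, as nested intervals show (e.g.\ blocks $[1,5],[2,3],[4,6]$). The correct general statement is that the blocks cover $X$ and their intersection graph is connected. This does not damage your argument: you use only the ``if'' direction to certify feasibility of the algorithm's output, and for $S^{*}$ you first pass to an irredundant set, where the consecutive-overlap formulation does hold.
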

	\begin{proof}
		Without loss of generality, we shall arrange the vertices in
		$Y=(y_1,\ldots,y_n)$ such that $S=(y_1,\ldots,y_k)$, are the vertices chosen by
		the Algorithm \ref{algo1} in order.
		We use a binary vector $A=(a_1,\ldots,a_n)$ to represent the output of our
		algorithm such that $a_i=1$, if $y_i\in S$, and $a_i=0$, otherwise.  It follows
		that $a_i=1$, $1\le i\le k$, $a_j=0$, $k+1\le j\le n$.  Let $S'$ denote an
		optimal Steiner set of $G$.  We use a binary vector $B=(b_1,\ldots,b_n)$ to
		represent $S'$ where $b_i=1$, if $y_i\in S'$, and $b_i=0$, otherwise.\\
		Since $S'$ is optimal, $|S'| \leq |S|$.  Further,
		$|B|=\sum\limits_{i=1}^{n}b_i\le \sum\limits_{i=1}^{n}a_i=|A|$.  To show that
		$|S|= |S'|$, we need to prove that $|S|\leq |S'|$, that is $|A| \leq |B|$, we
		need to prove $\sum\limits_{i=1}^{n}a_i\le \sum\limits_{i=1}^{n}b_i$.  We prove
		by mathematical induction on the number of indices $d$ where $A$ and $B$ differ.
		\\\emph{Base case:} when $d=0$, $|A|= |B|$.  Thus, $|A| \leq |B|$.  
		\\\emph{Induction Hypothesis:} Assume that for $d\ge1$, if $A$ and $B$ differ in
		fewer than $d$ positions, then $|A|\le|B|$. 
		\\\emph{Induction Step:} Let the binary vectors $A$, $B$ differ by $d\ge1$
		positions.   Let $j$ be the least index such that $a_j\ne b_j$.  Since $S'$ is
		an optimal solution, it cannot be the case that $a_j=0$ and $b_j=1$.  Therefore,
		$a_j=1$ and $b_j=0$.  Further, $j\le k$.  This implies that $b_i=1$ for $1\le
		i<j$, and $b_j=0$.  Recall that $z_j=r(w(z_{j-1})),~2\leq j \leq m$ and
		$y_j=w(z_{j})$.  Observe that $N(y_{j-1})\cap N(y_j)\neq \emptyset$ and
		$N(y_{j+1})\cap N(y_j)\neq \emptyset$ as $G$ is connected, and by our choice of
		$y_j$, for each $1\leq i \leq j-2,~j+2\leq i \leq k$, $N(y_j)\cap
		N(y_i)=\emptyset$.
		Since $S'$ is an optimal solution, there exists $b_l=1$ with $l>k$ such that
		$\{z_{j},y_l\}\in E(G)$.  If $\{z_{j},y_l\}\notin E(G)$, then feasibility of the
		solution (connectedness) is lost.  That is, the graph induced on $S'\cup X$ has
		vertices $z_j,z_{j+1}$ in different connected components.  This contradicts the
		fact that $S'$ is an optimal Steiner set.  Therefore, $\{z_j,y_l\}\in E(G)$. \\
		Since our algorithm has chosen $y_j$ over $y_l$, it follows that $r(y_l)\le
		r(y_j)$, and $N(\{y_1,\ldots,y_{j-1},y_l\})\subseteq
		N(\{y_1,\ldots,y_{j-1},y_j\})$.  As part of our cut-and-paste argument, we
		modify the vector $B$ to obtain a vector $C=(c_1,\ldots,c_n)$ as follows: 
		$c_i=b_i$, $1\le i\le n$, $i\notin \{j,l\}$, $c_j=1$, $c_l=0$.  It follows that
		the binary vectors $C$ (modified $B$) and $A$ differ in fewer than $d$ positions
		and by the induction hypothesis, $|A|\le|C|$.  Note that $|C|=|B|$.  Thus,
		$|A|\leq |B|$.  We continue this argument, if there is still a mismatch between
		$A$ and $C$, and stop this cut-and-paste argument, when $d=0$.  Thus, $|A|=|B|$,
		and $A$ is also an optimal solution.  This completes the proof of the
		theorem.\qed
	\end{proof}
	\noindent 
	\emph{Remarks:}  The proof is constructive in nature, and given an optimal
	solution, we can obtain another optimal solution by the constructive argument
	mentioned in the proof.
	\subsection{STREE with $R \subset X$}
	We shall now present a greedy algorithm (Algorithm \ref{algo2}) for finding the
	Steiner tree in a convex bipartite graph with $R \subset X$.  When $|X|\leq 2$,
	the Steiner set is empty.  As part of Algorithm \ref{algo2}, we shall consider
	$|X|\geq 3$.  Consider $R=\{z_1,\ldots,z_k\}$, recall that $z_i$ appears before
	$z_{i+1}$ in the ordering of $X$.  We start from $z_1$ and check whether the
	exploration can continue from $r(w(z_1))$ or $z_{j}$, where $z_{j}$ is the
	greatest indexed vertex in $R$ adjacent to $w(z_1)$.   Let $S_1$ be the set of
	vertices chosen by algorithm for obtaining path from $p=r(w(z))$ until
	$z_{j+1}$, and $S_2$ be the set of vertices chosen by algorithm for obtaining
	path from $w(q)$ until $z_{j+1}$, where $q=z_j$.  We choose the minimum out of
	these two subsolutions at each iteration.   This greedy strategy is optimal
	which we establish in this section. 
	\\\begin{minipage}{0.6\textwidth}
		\begin{algorithm}[H]
			\caption{{\em STREE with $R \subset X $}}
			\label{algo2}
			\begin{algorithmic}[1]
				\State{{\tt Input:}  A connected convex bipartite graph $G$ with $R \subset
					X$. }
				\State{Prune the vertices in $X$ less than $z=z_1$}
				\State{Initialize Steiner set $S=\{w(z)\}$, and let $z_j$ be the greatest
					indexed vertex in $R$ adjacent to $w(z)$}
				\While{$j< k$}
				\State{Initialize $p=r(w(z))$, $q=z_j$}
				\If{$p\ne q$}
				\State{Initialize $S_1=\{p\}$, $S_2=\emptyset$ }
				\Else
				\State{Initialize $S_1=\emptyset$, $S_2=\emptyset$ }  
				\EndIf
				\While{$\{z_{j+1},w(p)\}\notin E(G)$}
				\State{$S_1=S_1\cup \{w(p),r(w(p))\}$}
				\State{$p=r(w(p))$}
				\EndWhile
				\While{\{$z_{j+1},w(q)\}\notin E(G)$}
				\State{$S_2=S_2\cup \{w(q),r(w(q))\}$}
				\State{$q=r(w(q))$}
				\EndWhile
				\If{$|S_1|<|S_2|$}
				\State{$S=S\cup S_1\cup \{w(p)\}$; $z=p$}
				\State{Update $z_j$ to be the greatest indexed vertex in $R$ adjacent to
					$w(p)$}
				\Else
				\State{$S=S\cup S_2\cup \{w(q)\}$; $z=q$}
				\State{Update $z_j$ to be the greatest indexed vertex in $R$ adjacent to
					$w(q)$}
				\EndIf
				\EndWhile
			\end{algorithmic}
		\end{algorithm}
	\end{minipage}%
	\begin{minipage}{0.2\textwidth}
		$~~$\\\\\\
		\includegraphics{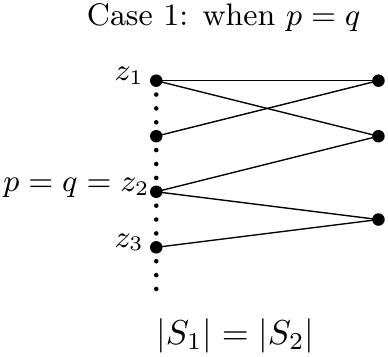}
		\\\\
		\includegraphics{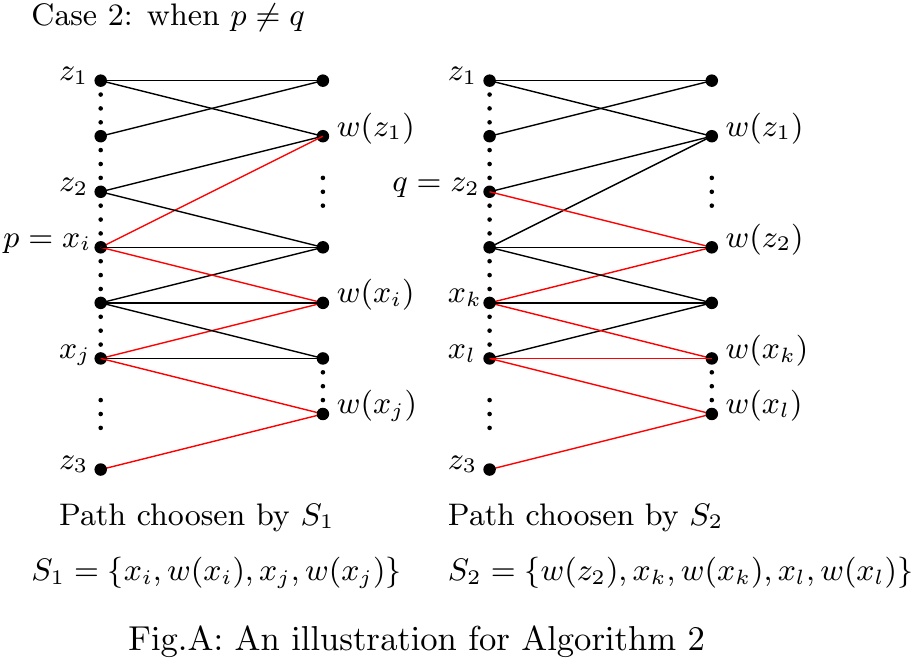}
	\end{minipage}
	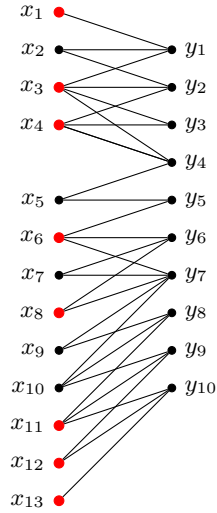
\begin{figure}[H]
		\begin{center}
			\begin{tikzpicture}
				\tikzstyle {vertex}=[circle, inner sep=0pt, minimum
				size=0.75ex,draw=black,fill=black]
				\tikzstyle {vertex1}=[circle, inner sep=0pt, minimum
				size=0.95ex,draw=red,fill=red]
				\node [vertex](15)[label=right:$y_{10}$] at (1.5,-2){};
				\node [vertex](14)[label=right:$y_9$] at (1.5,-1.5){};
				\node [vertex](13)[label=right:$y_8$] at (1.5,-1){};
				\node [vertex](12)[label=right:$y_7$] at (1.5,-0.5){};
				\node [vertex1](7)[label=left:$x_{13}$] at (0,-3.5){};
				\node [vertex1](6)[label=left:$x_{12}$] at (0,-3){};
				\node [vertex1](5)[label=left:$x_{11}$] at (0,-2.5){};
				\node [vertex](4)[label=left:$x_{10}$] at (0,-2){};
				\node [vertex](3)[label=left:$x_9$] at (0,-1.5){};
				\node [vertex1](2)[label=left:$x_8$] at (0,-1){};
				\node [vertex](1)[label=left:$x_7$] at (0,-0.5){};
				\node [vertex](11)[label=right:$y_6$] at (1.5,0){};
				\node [vertex1](a)[label=left:$x_6$] at (0,0){};
				\node [vertex](b)[label=right:$y_5$] at (1.5,0.5){};
				\node [vertex](c)[label=left:$x_5$] at (0,0.5){};
				\node [vertex](d)[label=right:$y_4$] at (1.5,1){};
				\node [vertex1](e)[label=left:$x_4$] at (0,1.5){};
				\node [vertex](f)[label=right:$y_3$] at (1.5,1.5){};
				\node [vertex1](g)[label=left:$x_3$] at (0,2){};
				\node [vertex](h)[label=right:$y_2$] at (1.5,2){};
				\node [vertex](i)[label=left:$x_2$] at (0,2.5){};
				\node [vertex](j)[label=right:$y_1$] at (1.5,2.5){};
				\node [vertex1](k)[label=left:$x_1$] at (0,3){};
				\draw[] (a)--(b);
				\draw[] (c)--(b);
				\draw[] (e)--(d);  
				\draw[] (c)--(d);
				\draw[] (e)--(d);
				\draw[] (e)--(f);
				\draw[] (h)--(e);
				\draw[] (g)--(f);
				\draw[] (g)--(d);
				\draw[] (g)--(h);
				\draw[] (g)--(j);
				\draw[] (i)--(h);
				\draw[] (i)--(j);
				\draw[] (k)--(j);
				\draw (a)--(11);
				\draw (a)--(12);
				\draw (1)--(11);
				\draw (1)--(12);
				\draw (2)--(11);
				\draw (2)--(12);
				\draw (3)--(12);
				\draw (4)--(12);
				\draw (3)--(13);
				\draw (4)--(14);
				\draw (5)--(13);
				\draw (4)--(13);
				\draw (5)--(14);
				\draw (6)--(14);
				\draw (5)--(15);
				\draw (6)--(15);
				\draw (7)--(15);
			\end{tikzpicture}
			\caption{An illustration for $R\subset X$,
				$R=\{x_1,x_3,x_4,x_6,x_8,x_{11},x_{12},x_{13}\}$}
			\label{rsx}
		\end{center}
	\end{figure}
	\noindent An illustration for $R\subset X$ is given in Figure \ref{rsx} and its
	trace for Algorithm \ref{algo2} is given below.
	The terminal vertices are
	$R=\{z_1=x_1,z_2=x_3,z_3=x_4,z_4=x_6,z_5=x_8,z_6=x_{11},z_7=x_{12},z_8=x_{13}\}$.
	Initially, $z=z_1=x_1$, $S=\{y_1\}$, $k=8$.  In Iteration 1; we see that $2<8$,
	$p=r(w(z_1))=x_3$, $q=z_2=x_3$, $p=q$.  Hence $S_1=\emptyset$, $S_2=\emptyset$. 
	At Step $23$, $S$ is updated to $S=\{y_1,y_4\}$, $z=x_3$, $z_j=x_4$. In
	Iteration 2; $3\leq 8$, $p=r(w(z_3))=x_5$, $q=x_4$, $p\neq q$.  By Step $7$, we
	get $S_1=\{x_5\}$, $S_2=\emptyset$.  As per the first while loop; $\{x_6,y_5\}
	\in E(G)$, therefore the condition is false. In the second loop, $\{x_6,y_4\} \notin
	E(G)$, $S_2$ is updated as $S_2=\{y_4,x_5\}$, $q=x_5$. Further in the next
	iteration $\{x_6,y_5\} \in E(G)$, therefore the condition is false and the while
	loop terminates.  We see that Step $19$ is true, $S$ is updated to
	$S=\{y_1,y_4\}\cup \{x_5\}\cup \{y_5\}$.  Further, $z=x_5$, $z_j=z_4=x_6$.  In
	Iteration 3; $4\le 8$, $p=x_6$, $q=x_6$, $p=q$.  Hence by Step $9$,
	$S_1=\emptyset$, $S_2=\emptyset$.  Further in both while loops $\{x_8,y_7\} \in
	E(G)$, therefore conditions are false. At Step 23, $S$ is updated to
	$S=\{y_1,y_4,x_5,y_5\}\cup\{y_7\}$, $z=x_6,z_j=z_5=x_8$.  In Iteration 4; $5 \le
	8$, $p=x_{10}$, $q=x_8$, $p \neq q$ and $S_1=\{x_{10}\}$, $S_2=\emptyset$. 
	Since $\{x_{11},y_9\} \in E(G)$, the while loop condition fails at Step 11 and
	for the other while loop $\{x_{11},y_7\} \notin E(G)$ is true, at Step 15.
	Inside the while loop $S_2$ is updated as $S_2=\{y_7,x_{10}\}$, $q=x_{10}$. At
	Step $23$, $S$ is updated to $S=\{y_1,y_4,x_5,y_5,y_7,x_{10},y_{9}\}$,
	$z_j=z_7=x_{12}$, $z=x_{10}$.  
	In Iteration 5; $7\le 8$, $p=x_{12},q=x_{12}$.  We see that $p=q$, hence
	$S_1=\emptyset$, $S_2=\emptyset$.  In while loops, since $\{x_{13},y_{10}\}\in
	E(G)$, therefore conditions are false. At Step $23$, $S$ is updated to
	$S=\{y_1,y_4,x_5,y_5,y_7,x_{10},y_{9},y_{10}\}$, $z=x_{12}$, $z_j=z_8=x_{12}$.
	In the next iteration, $8<8$ is not true.  Thus, Algorithm \ref{algo2} outputs
	$S=\{y_1,y_4,x_5,y_5,y_7,x_{10},y_{9},y_{10}\}$.
	\begin{obs}
		Let $S$ be any optimal Steiner set.  For each Steiner vertex $y \in Y$, there
		exists at most two Steiner vertices adjacent to $y$ in $S$.
	\end{obs}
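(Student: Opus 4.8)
The plan is to argue by contradiction using the minimality of an optimal Steiner set together with the convexity of the neighbourhoods. Suppose some Steiner vertex $y\in S\cap Y$ has three (or more) Steiner vertices adjacent to it; since $G$ is bipartite these all lie in $X$, so pick Steiner neighbours $x_a\prec x_b\prec x_c$ with $x_a,x_b,x_c\in S\cap X$. First I would record two facts to be used repeatedly. Because $S$ is optimal it is also minimal, so no Steiner vertex can be a pendant of $G[R\cup S]$ (deleting a pendant Steiner vertex would leave $R\cup(S\setminus\{s\})$ connected and smaller); hence every vertex of $S$ has degree at least two in $G[R\cup S]$. Moreover, since we are in the case $R\subset X$, every vertex of $Y$ lying in $R\cup S$ is itself a Steiner vertex, so all neighbours of a Steiner $X$-vertex inside $G[R\cup S]$ belong to $S\cap Y$. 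My goal is then to show that the middle vertex $x_b$ is redundant, i.e. $G[R\cup(S\setminus\{x_b\})]$ is still connected, contradicting minimality.

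The heart of the argument is a local rerouting claim: every neighbour $y'$ of $x_b$ in $G[R\cup S]$ is joined to $y$ by a path that avoids $x_b$. I would prove this by a two-case analysis on the interval $N(y')=[l(y'),r(y')]$, which necessarily contains $x_b$. If this interval reaches an extreme vertex (that is, $l(y')\preceq x_a$ or $r(y')\succeq x_c$), then $y'$ is adjacent to that vertex, and since $y$ is adjacent to both $x_a$ and $x_c$, the path $y'$–$x_a$–$y$ (or $y'$–$x_c$–$y$) does the job. Otherwise $x_a\prec l(y')\preceq x_b\preceq r(y')\prec x_c$, and here I invoke the degree bound: $y'$ has a neighbour $x'\neq x_b$ with $x'\in N(y')\subseteq(x_a,x_c)$. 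By convexity, $N(y)=[l(y),r(y)]$ is an interval containing $x_a$ and $x_c$, hence containing all of $[x_a,x_c]\supseteq N(y')$; therefore $x'\in N(y)$, and $y'$–$x'$–$y$ is the required detour. With this claim in hand, every path of $G[R\cup S]$ that traverses $x_b$, entering and leaving through two neighbours $y',y''\in S\cap Y$, can be rerouted as $y'\rightsquigarrow y\rightsquigarrow y''$, so $G[R\cup(S\setminus\{x_b\})]$ is connected, which is the desired contradiction.

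The step I expect to be the main obstacle is exactly this second, interior case, where $x_b$ has a private neighbour $y'$ whose interval sits strictly between $x_a$ and $x_c$ and so touches neither extreme; at first glance $x_b$ looks indispensable for such a $y'$. The resolution I would stress is the interplay between minimality and convexity: minimality forces $y'$ to possess a second neighbour $x'$, and convexity forces that $x'$ to fall inside $N(y)$, giving the alternate connection through $y$. I would also note that the reduction of all relevant $Y$-vertices to Steiner vertices, which is what lets the degree-at-least-two property apply to them, is precisely where the hypothesis $R\subset X$ enters; the same reasoning then pins the count at two, corresponding to one leftward and one rightward Steiner neighbour of $y$.
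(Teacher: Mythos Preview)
Your argument is correct. The paper states this as an observation without proof, so there is nothing to compare against; your contradiction via removing the middle Steiner vertex $x_b$ is a clean way to justify it. The two ingredients you isolate---that minimality of $S$ forces every Steiner vertex (in particular every $y'\in S\cap Y$, which is where $R\subset X$ is used) to have degree at least two in $G[R\cup S]$, and that convexity forces $N(y)\supseteq[x_a,x_c]\supseteq N(y')$ whenever $y'$ is an ``interior'' neighbour of $x_b$---are exactly what is needed, and together they show every neighbour of $x_b$ in $G[R\cup S]$ reaches $y$ without passing through $x_b$, so $x_b$ is redundant.
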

	\begin{lemma}
		In Algorithm \ref{algo2}, for each iteration, the difference between $|S_1|$ and
		$|S_2|$ is at most one.
	\end{lemma}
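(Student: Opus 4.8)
The plan is to reduce the claim to a comparison of the number of iterations of the two inner \textbf{while} loops, and then to prove that comparison by coupling the two greedy walks the loops perform. Each pass through either loop appends exactly two vertices, a vertex $w(\cdot)\in Y$ and a vertex $r(w(\cdot))\in X$, and along a single loop the appended $X$-vertices strictly increase (otherwise the walk could never reach $z_{j+1}$, contradicting connectivity of $G$), so all appended vertices are distinct. Writing $t_1,t_2$ for the two iteration counts, this gives $|S_2|=2t_2$, while $|S_1|=2t_1$ when $p=q$ (then $S_1$ starts empty) and $|S_1|=2t_1+1$ when $p\neq q$ (then $S_1$ is seeded with the single vertex $p$). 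When $p=q$ the loops run the identical recurrence from the identical vertex, so $t_1=t_2$ and $|S_1|=|S_2|$; hence the whole content of the lemma is the case $p\neq q$, where it suffices to show $t_1\le t_2\le t_1+1$, which yields $\big||S_1|-|S_2|\big|=1$.

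To set up the coupling I would write $\phi(x)=r(w(x))$ for the greedy jump and make the walks explicit: $p_0=p=r(w(z))$, $p_{i+1}=\phi(p_i)$, and $q_0=q=z_j$, $q_{i+1}=\phi(q_i)$. Each loop halts at the first index whose current vertex has $w(\cdot)$ adjacent to $z_{j+1}$, equivalently (by convexity, as the current vertex precedes $z_{j+1}$) as soon as $\phi(\cdot)\succeq z_{j+1}$. Two elementary facts drive the argument. First $\phi(x)\succeq x$, because $w(x)\in N(x)$ forces $x\in[l(w(x)),r(w(x))]$; so neither walk moves left. Second, the seed inequality $q_1\succeq p_0$: since $w(z)$ is adjacent both to $q_0=z_j$ and to $r(w(z))=p_0$, convexity makes $w(z)$ a neighbor of $q_0$ with reach $p_0$, whence $\phi(q_0)\succeq r(w(z))=p_0$.

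The technical heart is to propagate the seed through both walks, proving by induction on $i$ the two relations $q_{i+1}\succeq p_i$ and $p_i\succeq q_i$. The obstacle is that $\phi$ is \emph{not} monotone, so I cannot apply it to an inequality directly; instead I would argue locally by convexity. For instance, to go from $q_{i+1}\succeq p_i$ to $q_{i+2}\succeq p_{i+1}$, split on whether $q_{i+1}\preceq p_{i+1}=\phi(p_i)$: if so, then $q_{i+1}$ lies in the interval $[l(w(p_i)),r(w(p_i))]$ covered by $w(p_i)$, so $w(p_i)\in N(q_{i+1})$ and $q_{i+2}=\phi(q_{i+1})\succeq r(w(p_i))=p_{i+1}$; if instead $q_{i+1}\succ p_{i+1}$, then $q_{i+2}\succeq q_{i+1}\succ p_{i+1}$ by $\phi(x)\succeq x$. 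The same interval-membership dichotomy (with the base case $p_0\succeq q_0$, as $z_j\in N(w(z))$) handles $p_i\succeq q_i$.

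Finally I would read off the counts by applying the same dichotomy at the halting step. From $p_i\succeq q_i$, when the $q$-walk halts at step $t_2$ the $p$-walk has already halted, so $t_1\le t_2$; from $q_{i+1}\succeq p_i$, when the $p$-walk halts at step $t_1$ the $q$-walk halts within one more step, so $t_2\le t_1+1$. Together $t_1\le t_2\le t_1+1$, which by the first paragraph gives $\big||S_1|-|S_2|\big|\le 1$. I expect the only real difficulty to be the non-monotonicity of $\phi$; once the interval/convexity case analysis is set up, the counting and the two terminal comparisons are routine.
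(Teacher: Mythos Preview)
Your proposal is correct and follows essentially the same approach as the paper: both arguments compare the two greedy walks $p_i=\phi^{\,i}(p)$ and $q_i=\phi^{\,i}(q)$ step by step, using convexity to show the walks stay within one step of each other and then reading off $|S_1|,|S_2|$ from the iteration counts. Your presentation is in fact more careful than the paper's---you make the two sandwich invariants $p_i\succeq q_i$ and $q_{i+1}\succeq p_i$ explicit and prove them by an interval-membership dichotomy, whereas the paper asserts the analogous comparison $u_k\ge v_l$ rather tersely and splits directly into the cases ``$u_i=v_i$ for some $i$'' versus ``$u_i>v_i$ for all $i$''; the substance is the same.
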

	\begin{proof}
		Let $z$ be the vertex under consideration in $R$ and $z_j$ is the greatest
		indexed vertex in $R$ adjacent to $w(z)$. $p=r(w(z)),~q=z_j$.
		\\\textbf{Case 1: $p = q$.}  Steps 11-18 of Algorithm \ref{algo2} computes $S_1$
		and $S_2$.  Since $p=q$, it implies that $|S_1| = |S_2|$.
		\\\textbf{Case 2:  $p \neq q$.}  Let the path starting from $p$ to $z_{j+1}$ be
		$P_1$ and, the path starting from $q$ to $z_{j+1}$ be $P_2$. Let
		$P_1=(u_1=p=r(w(z)),w(u_1),u_2=r(w(u_1)),w(u_2),\ldots,u_s,w(u_s),z_{j+1})$, and
		
		$P_2=(v_1=q=z_j,w(v_1),v_2=r(w(v_1)),w(v_2),$
		$\ldots,v_t,w(v_t),v_{t+1}=z_{j+1})$. Observe that Steps 11-14 of Algorithm
		\ref{algo2} constructs $P_1$ and updates $S_1$; $S_1=V(P_1)\setminus
		\{w(u_s),z_{j+1}\}$.  Similarly, Steps 15-18 of Algorithm \ref{algo2} constructs
		$P_2$ and updates $S_2$; $S_2=V(P_2)\setminus \{z_j,w(v_t),v_{t+1}\}$.  Since
		$G$ is  convex on $X$, for $1\leq k \leq s,~1\leq l \leq t$,  $u_k\geq v_l$ and
		$\{v_{l+1},w(u_k)\}\in E(G)$.  
		\\\emph{Case:} For some $i\geq 2$, $u_i=v_i$.
		In this case, we observe that $s=t$.  Further, $|S_1|=2s-1, |S_2|=2t-2$.
		Therefore, $|S_1|-|S_2|=1$.
		\\\emph{Case:} For all $i\geq 2$, $u_i>v_i$.
		We observe that $s=t-1$.  Therefore, $|S_2|-|S_1|=1$.  		
		\\By the definition of $p$ and $q$, the case $u_i<v_i$ cannot happen.
		From above two cases we see that $|S_1|$ and $|S_2|$ can differ by at most one.
		\qed
	\end{proof}
	\begin{theorem}\label{algo2correctness}
		Let $G$ be a convex bipartite graph.  The set $S$ of Steiner vertices of $G$
		obtained from Algorithm \ref{algo2} is a minimum Steiner set.
	\end{theorem}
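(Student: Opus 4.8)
The plan is to follow the cut-and-paste exchange strategy of Theorem~\ref{algo1correctness}, but since $R\subset X$ forces the Steiner set to contain both $X$- and $Y$-vertices, I would first isolate the correct optimal-substructure statement and only then run the induction. Throughout I write $S$ for the set returned by Algorithm~\ref{algo2} and $S'$ for an arbitrary optimal Steiner set, and the goal is the inequality $|S|\le|S'|$, the reverse being immediate from optimality.

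\textbf{Reduction to an ordered, block-by-block solution.} First I would show that every optimal $S'$ may be assumed to respect the ordering $z_1\prec\cdots\prec z_k$ of $R$ in the following sense: the subgraph induced by $S'\cup R$ contains, for each consecutive pair of terminals, a connecting path whose $X$-vertices are monotonically increasing in the convex ordering. This is where convexity on $X$ enters, since $N(y)$ is an interval for every $y\in Y$ and $w(z)$ is chosen so that $r(w(z))$ is maximum; hence any ``backward'' use of a $Y$-vertex can be straightened without increasing $|S'|$. The consequence is that $S'$ decomposes into blocks, one per iteration of the while loop, where the block for the current frontier connects the already-covered prefix (ending at the vertex $z$ for which $w(z)$ was chosen) to the first terminal $z_{j+1}$ not yet reached.

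\textbf{Induction on iterations with a per-block exchange.} I would then induct on the iteration index, the hypothesis being that after covering all terminals up to $z_j$ the algorithm has used no more vertices than any feasible subgraph covering the same terminals, while leaving the frontier ``as far right as the cost paid allows''. For the inductive step I invoke the preceding Lemma: the two candidate bridges $S_1$ (built from $p=r(w(z))$) and $S_2$ (built from $q=z_j$) differ in size by at most one, and the algorithm commits to the smaller one together with the single connecting vertex $w(p)$ or $w(q)$. The exchange then shows that the block of $S'$ bridging the same frontier to $z_{j+1}$ costs at least $\min\{|S_1|,|S_2|\}+1$: locating the first vertex on which the algorithm and $S'$ disagree and using the maximality of $r(w(\cdot))$, I replace the offending $Y$-vertex of $S'$ by the greedy one to obtain a set $C$ with $|C|=|S'|$ agreeing with the algorithm on strictly more positions, exactly as in Theorem~\ref{algo1correctness}; applying the hypothesis to $C$ gives $|S|\le|C|=|S'|$.

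\textbf{Main obstacle.} The delicate point is the interaction at block boundaries. In the $R=X$ case every Steiner vertex lies in $Y$ and covering $X$ is a clean interval problem, so a single greedy choice never compromises later ones. Here a block is a path alternating between $X$- and $Y$-vertices, and whether the next block starts at $r(w(\cdot))$ or reuses a terminal depends on which of $S_1,S_2$ was taken; I must therefore verify that committing to the locally cheaper bridge cannot force a strictly more expensive bridge at the following step. Establishing this ``no-regret'' invariant, and checking that it survives both the $p=q$ and $p\neq q$ cases of the Lemma, is the crux; once it is in place, the counting argument carries over verbatim and the theorem follows.
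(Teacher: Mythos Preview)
Your plan follows the same cut-and-paste philosophy as the paper, but you organise the induction differently and this difference is worth noting. The paper does \emph{not} first reduce an optimal $S'$ to a block-by-block form aligned with the iterations of Algorithm~\ref{algo2}. Instead it lists \emph{all} vertices chosen by the algorithm (both $X$- and $Y$-vertices) in a single sequence $\sigma=(v_1,\ldots,v_l)$, encodes the algorithm's output and the optimum as binary vectors $A,B$ over $\sigma$, and inducts on the number of positions where $A$ and $B$ disagree. The case split is then driven by whether the vertex $v_{j-1}$ preceding the first mismatch lies in $X$ or in $Y$, and in the latter case by whether $N(v_{j-1})\cap R$ is empty. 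What you call the ``no-regret'' crux is exactly the paper's Case~2.2 with $N(z_k)\cap N(z_{k+1})=\emptyset$: there the paper compares the algorithm's two candidate paths $P_1,P_2$ against the connecting segment $Q$ used by the optimum, and since the algorithm committed to the shorter of $P_1,P_2$, one has $|Q|\ge\min\{|S_1|,|S_2|\}+1$, so swapping $Q$ for the algorithm's block gives a vector $C$ with $|C|=|B|$ and strictly fewer disagreements.

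Your block-by-block induction is a legitimate alternative, and it has the conceptual advantage of making the role of the preceding Lemma explicit. Its cost is the preliminary ``straightening'' step, which you assert but do not argue: showing that any optimal $S'$ can be taken to use monotone $X$-coordinates between consecutive terminals is not entirely free, and you would need convexity plus a careful accounting of the $X$-Steiner vertices to rule out that a shared $Y$-vertex serving two non-adjacent blocks could beat the monotone form. The paper sidesteps this by never assuming any global shape for $S'$; it only ever exchanges a single vertex (Cases~1, 2.1, first half of 2.2) or a single connecting segment (second half of 2.2) at the first point of disagreement. If you pursue your route, the straightening reduction and the no-regret invariant are the two places where real work remains.
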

	\begin{proof}
		Without loss of generality, we shall order the vertices in $G$ as
		$\sigma=[v_1,\ldots,v_t]$, $t=|V(G)|$ in a way that $S=(v_1,\ldots,v_l)$ are the
		vertices chosen by Algorithm \ref{algo2} in order.  Note that the ordering
		$\sigma$ is with respect to the ordering of vertices chosen by the algorithm and
		not in accordance with the convex ordering of $X$.  We use a binary vector
		$A=(a_1,\ldots,a_t)$ to represent the output of our algorithm such that $a_i=1$,
		if $v_i\in S$, and $a_i=0$, otherwise.  It follows that $a_i=1$, $1\le i\le l$,
		$a_j=0$, $l+1\le j\le t$.  Let $S'$ denote an optimal Steiner set of $G$. We
		use a binary vector $B=(b_1,\ldots,b_t)$ to represent $S'$ where $b_i=1$, if
		$v_i\in S'$, and $b_i=0$, otherwise.\\
		Since $S'$ is optimal, $|S'|\leq |S|$.  Further,
		$|B|=\sum\limits_{i=1}^{t}b_i\le \sum\limits_{i=1}^{t}a_i=|A|$.  To show that
		$|S'|=|S|$, we need to show that $|S|\leq |S'|$, that is $|A|\leq |B|$.  To show
		that $|A|\leq |B|$, we need to prove $\sum\limits_{i=1}^{t}a_i\le
		\sum\limits_{i=1}^{t}b_i$.  We prove by strong mathematical induction on the
		number of indices $d$ where $A$ and $B$ differ.  
		\\\emph{Base case:} when $d=0$, $|A|=|B|$. Thus, $|A|\le |B|$.  
		\\\emph{Induction Hypothesis:} Assume that for $d\ge1$, if $A$ and $B$ differ
		in less than $d$ positions, then $|A|\le|B|$.  
		\\\emph{Induction Step:} Let the binary vectors $A$, $B$ differ by $d\ge1$
		positions.  Let $j$ be the least index such that $a_j\ne b_j$.  Note that $j\le
		l$, otherwise $S'$ is not optimal.  This implies that $b_i=1$, $1\le i<j$, and
		$b_j=0$.  We consider the following cases to complete our proof.\\
		\emph{Case} $1$: $v_{j-1}\in X$.  Since $S'$ is an optimal solution, there
		exists $b_k=1,k>l$ such that $\{v_{j-1},v_k\}\in E(G)$.  Note that $v_{j}\in Y$
		and $v_j=w(v_{j-1})$.  Similar to the proof of the previous theorem, we modify
		$B$ to obtain a vector $C$ as follows; $C=(c_1,\ldots,c_t)$, $c_i=b_i$, $1\le
		i\le t$, $i\notin \{j,k\}$, $c_j=1$, $c_k=0$.  Note that $|C|=|B|$.  It follows
		that the binary vectors $C$ and $A$ differ in less than $d$ positions and by the
		induction hypothesis, $|A|\le|C|$.
		\\\emph{Case} $2$: $v_{j-1}\in Y$.  We have the following subcases.
		\\\emph{Case} $2.1$: $N(v_{j-1})\cap R=\emptyset$.  Observe that there exists
		$b_k=1,k>l$ such that $\{v_{j-1},x_k\}\in E(G)$.  Note that $v_{j}=r(v_{j-1})$. 
		In this case, an optimal solution with the corresponding vector $C$ is obtained
		from $B$ by changing the values of $b_j,b_k$ as $b_j=1,b_k=0$.  It follows that
		the binary vectors $C,A$ differ in less than $d$ positions and by the induction
		hypothesis, $|A|\le|C|$.  Note that $|C|=|B|$.   Thus, $|A|\le|B|$.
		\\\emph{Case} $2.2$: $N(v_{j-1})\cap R\ne\emptyset$.  Let $z_k$ be the
		greatest indexed vertex in $N(v_{j-1})\cap R$.  If $N(z_{k})\cap
		N(z_{k+1})\ne\emptyset$, then note that $v_j\in Y$.  Observe that there exists
		$b_r=1,r>l$ such that $\{z_{k},v_r\}\in E(G)$.  Note that $v_{j}=w(z_{k})$.  In
		this case, an optimal solution with the corresponding vector $C$ is obtained
		from $B$ by changing the values of $b_j,b_r$ as $b_j=1,~b_r=0$.  Note that
		$|C|=|B|$.  It follows that the binary vectors $C,A$ differ in less than $d$
		positions and by the induction hypothesis, $|A|\le|C|$.  Thus, $|A|\le|B|$.  
		\\If $N(z_{k})\cap N(z_{k+1})=\emptyset$. 
		Note that  $v_j\in X$ or $v_j\in Y$.  Let $v_r$, $r\le l$ be the least vertex
		in $S$ adjacent to $z_{k+1}$.  Note that Steps 11-18 of Algorithm \ref{algo2}
		construct two paths $P_1$ and $P_2$, and choose the minimum out of these two
		paths. Let $P_1=(v_j=r(v_{j-1}),w(v_j),\ldots,v_r,z_{k+1})$, and
		$P_2=(z_k,v_j=w(z_k),\ldots,v_r,z_{k+1})$. 
		\\If $v_j\in Y$, then algorithm chooses $P_2$ as $P_2$ is shortest.  In this
		case, the number of vertices included in $S_2$ by $P_2$ is $r-j+1$ and all these
		vertices appear after $v_{j-1}$ with respect to $\sigma$.  Let $Q$ be the
		vertices chosen by the optimal algorithm to connect $z_k$ and $z_{k+1}$.  Since
		$P_2$ is the shortest path and $Q$ is part of optimal solution, cardinality of
		$Q$ is $r-j+1$.  We now bring our cut-and-paste argument and update $S'$ as
		$S'=(S'\setminus Q) \cup {S_2}$.
		\\Similarly, if $v_j\in X$, then the algorithm chooses $P_1$ as $P_1$ is the
		shortest between $P_1$ and $P_2$.  In this case the number of vertices included
		in $S_1$ by $P_1$ is $r-j+1$.  Let $Q$ be the vertices chosen by the optimal
		algorithm to connect $z_k$ and $z_{k+1}$.  Since $P_1$ is the shortest path and
		$Q$ is part of the optimal solution, the cardinality of $Q$ is $r-j+1$.  We now bring our cut-and-paste argument and update $S'$ as $S'=(S'\setminus Q) \cup {S_1}$.  Let $C$ (modified $B$) be the corresponding binary vector of $S'$.  Note that $|C|=|B|$.  Thus, the binary vectors $C,A$ differ in less than $d$ positions andx	by the induction hypothesis, $|A|\le|C|$.
		This completes the case analysis.  We conclude $|A|=|B|$ and $A$ is also an
		optimal solution.  This completes the proof of Theorem
		\ref{algo2correctness}.\qed
	\end{proof}
	\subsection{STREE when $R=Y$}
	We shall present a greedy algorithm (Algorithm \ref{algo3}) to output a minimum
	Steiner tree when $R=Y$.  Note that if $|Y|=1$, then the Steiner set is empty. 
	Therefore we work with $|Y|\geq 2$.  By definition, for each $y_i \in Y$,
	$N(y_i)=\{x_p,x_{p+1},\ldots,x_q\}$ is an interval.  Further, $l(y_i)=x_p$ and
	$r(y_i)=x_q$.  For all $y_i \in Y$, let $[l_i,r_i]$ represent the interval such
	that $l_i=p$ and $r_i=q$. We arrange the vertices of $Y$ as
	$(y_1,y_2,\ldots,y_n)$ such that for all $i,j$, $1 \leq i < j \leq n$, $r_i\leq
	r_j$.  We use $y_i$ to represent the vertex $y_i\in Y$ as well as the interval
	corresponding to $y_i$.
	\begin{algorithm}[H]
		\caption{{\em STREE when $R=Y$}}
		\label{algo3}
		\begin{algorithmic}[1]
			\State{{\tt Input:}  A connected convex bipartite graph $G$ with $R=Y$.}
			\State{All intervals are unmarked initially, let $|Y|=n$, Steiner set $S$ =
				\{\} }
			\For{$i=1$, $i \leq n$, $i=i+1$}
			\If{$y_i$ is unmarked}
			\State{$S = S \cup \{r_i\}$}     
			\State{Mark all intervals $y_j$ such that $r_i \in N(y_j)$}
			\Else
			\If{$r_i = x_m$} \State{Continue}
			\Else
			\If{there exists a marked $y_j$ such that $r_{i+1} \in N(y_j)$}
			\State{Continue} 
			\Else
			\State{$S = S \cup \{r_i\}$}
			\State{Mark all intervals $y_j$ such that $r_i \in N(y_j)$}
			\EndIf
			\EndIf
			\EndIf
			\EndFor
		\end{algorithmic}
	\end{algorithm}
	\begin{figure}
		\begin{center}
			\includegraphics[scale=0.9]{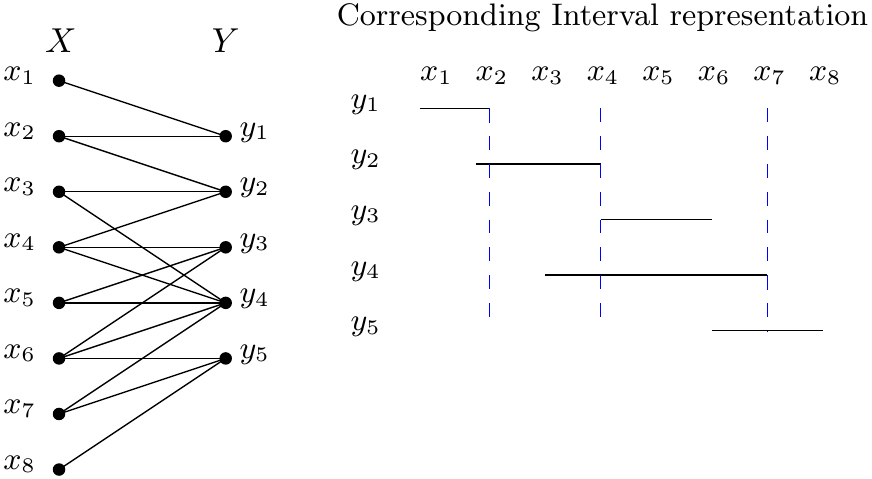}
			\caption{An illustration for $R=Y$}\label{r=y}
		\end{center}
	\end{figure}
	\begin{table}[H]
		\begin{adjustbox}{width=1\textwidth}
			\begin{tabular}{|c|m{15em}|c|c|}
				\hline
				Iteration number & Condition checking and marking status & Update on S &
				Update on marking\\
				\hline 
				&All $y\in Y$ vertices are unmarked initially, $i=1,~S=\{\}$&&\\
				\hline
				1&$1\leq 5$, $y_1$ is unmarked &$S=S\cup \{x_2\}$& Mark $y_1,y_2$\\
				\hline
				2&$2\leq 5$, $y_2$ is marked & $r_i=x_4\neq x_8$, $\nexists y_j$ such that
				$\{x_6,y_j\}\in E(G)$, $S=S\cup \{x_4\}$ & Mark $y_3,y_4$\\
				\hline
				3&$3\leq 5$, $y_3$ is marked& $r_i=x_6\neq x_8$, $\exists y_4$ such that
				$\{y_4,x_7\}\in E(G)$&-\\
				\hline
				4&$4\leq 5$, $y_4$ is marked&$r_i=x_7\neq x_8$,  $\nexists y_j$ such that
				$\{x_8,y_j\}\in E(G)$, $S=S\cup \{x_7\}$&Mark $y_5$\\
				\hline
				5&$5\leq 5$, $y_5$ is marked&$r_i=x_8 = x_8$&-\\
				\hline
				6&$6\leq 5$&-&Thus, $S=\{x_2,x_4,x_7\}$\\
				\hline
			\end{tabular}
		\end{adjustbox}
		\caption{Trace of Algorithm \ref{algo3}}\label{tracer=y}
	\end{table}
	\noindent An illustration and its interval representation is given in Figure \ref{r=y}, and its trace of Algorithm \ref{algo3} is given in Table \ref{tracer=y}.  Let
	$Z=\{z_1,z_2,\ldots,z_p\}\subseteq X$ be the vertices selected by our algorithm
	and $Z'=\{z'_1,z'_2,\ldots,z'_q\}\subseteq X$ be the vertices selected by any
	optimum algorithm.  Note that $z_1\leq z_2 \leq \ldots \leq z_p$.  Further we
	arrange $Z'$ such that $z'_1\leq z'_2 \leq \ldots \leq z'_q$.  For the set
	$\{z_1,z_2,\ldots,z_i\}$,
	$N(\{z_1,z_2,\ldots,z_i\})=\bigcup\limits_{j=1}^iN(z_j)$.
	\begin{theorem}
		\label{trey1}
		For all indices $i\leq q$, the following statements are true:
		\\1. $z'_i\leq z_i$
		\\2. $N(\{z_1,z_2,\ldots,z_i\})\supseteq
		N(\{z^{'}_1,z^{'}_2,\ldots,z^{'}_i\})$ 
	\end{theorem}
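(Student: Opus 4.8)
The plan is to prove both statements simultaneously by induction on $i$, leaning on two structural facts. First, since Algorithm~\ref{algo3} scans the intervals in nondecreasing order of right endpoint and always adds a right endpoint to $S$, the chosen vertices are monotone, $z_1\le z_2\le\cdots\le z_p$, and each $z_m$ equals $r_t$ for the interval $y_t$ whose processing triggered the $m$-th selection. Second, I would use feasibility of the optimum set $Z'$: every interval $y_j$ (a terminal) must contain some $z'_a$, and because $G[Z'\cup Y]$ is connected while each interval is contiguous on $X$, every consecutive pair $z'_{a},z'_{a+1}$ in the sorted order lies inside a common interval (otherwise the cut between $\{x\le z'_a\}$ and $\{x\ge z'_{a+1}\}$ would disconnect $Z'\cup Y$). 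For the base case $i=1$, greedy sets $z_1=r_1$, the minimum right endpoint; the optimum covers $y_1$ by some $z'_a\in[l_1,r_1]$, so $z'_1\le z'_a\le r_1=z_1$, and any $y_j\ni z'_1$ satisfies $l_j\le z'_1\le r_1\le r_j$, so $z_1\in N(y_j)$, giving $N(z_1)\supseteq N(z'_1)$.

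For the inductive step of statement~1, assume both statements for $i-1$. Greedy's $i$-th selection $z_i=r_s$ happens while processing some $y_s$, in one of two cases. In the \emph{unmarked} case, $y_s\notin N(\{z_1,\ldots,z_{i-1}\})$; the hypothesis (contrapositive of statement~2) gives $y_s\notin N(\{z'_1,\ldots,z'_{i-1}\})$, so the optimum covers $y_s$ by some $z'_a$ with $a\ge i$, whence $z'_i\le z'_a\le r_s=z_i$.

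The \emph{bridge} case ($y_s$ marked, $r_s\ne x_m$, and no marked interval contains $r_{s+1}$) is the crux, and I would argue by contradiction. Suppose $z'_i>r_s$. Feasibility supplies a common interval $y$ with $z'_{i-1},z'_i\in y$; since $z'_{i-1}\le z_{i-1}\le z_i=r_s<z'_i$ (monotonicity), $y$ contains both $z_{i-1}$ and $r_s$, so $y\in N(\{z_1,\ldots,z_{i-1}\})$ is marked. The bridge condition forbids a marked interval from containing $r_{s+1}$, forcing $r(y)<r_{s+1}$; together with $r(y)\ge z'_i>r_s$ this produces an interval whose right endpoint lies strictly between $r_s$ and $r_{s+1}$, contradicting that $y_s$ and $y_{s+1}$ are consecutive in the right-endpoint ordering. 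Hence $z'_i\le r_s=z_i$.

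Statement~2 then follows from statement~1 at index $i$: for any $y_j\in N(\{z'_1,\ldots,z'_i\})$, if $y_j$ meets $\{z'_1,\ldots,z'_{i-1}\}$ apply the hypothesis; otherwise $z'_i\in y_j$, and either $z_i\le r_j$, so $l_j\le z'_i\le z_i\le r_j$ places $z_i$ in $y_j$, or $r_j<z_i=r_s$, meaning $y_j$ was scanned before $y_s$ and was therefore already covered by some $z_m$ with $m<i$. In either situation $y_j\in N(\{z_1,\ldots,z_i\})$. I expect the bridge case of statement~1 to be the main obstacle: it is the only place where connectivity (rather than mere coverage) is needed, and the delicate point is converting greedy's local test ``no marked interval reaches $r_{s+1}$'' into the global fact that no interval's right endpoint can fall strictly between $r_s$ and $r_{s+1}$.
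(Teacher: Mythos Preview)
Your proposal is correct and follows the same inductive skeleton as the paper's proof. The one substantive difference is in the inductive step for statement~1: the paper dispatches it in a single line (``our algorithm always includes $z_i=r(y)$ of an interval $y$, hence $z'_i\le z_i$''), whereas you split into the \emph{unmarked} and \emph{bridge} selection cases and, in the latter, invoke connectivity of the optimal Steiner set to obtain a shared interval between $z'_{i-1}$ and $z'_i$. That connectivity argument is a genuine ingredient the paper does not make explicit, and your handling of it (forcing a right endpoint strictly between $r_s$ and $r_{s+1}$, which is impossible by the ordering) is sound. Your treatment of statement~2 is essentially identical to the paper's, just phrased constructively rather than by contradiction.
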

	\begin{proof}By mathematical induction on $i$.
		\\\textbf{Base Case:}
		\\Since $z'_1 \leq z'_j$, $j > 1$, we have $\{y_1,z'_1\} \in E(G)$.  Since our
		algorithm has chosen $z_1$, $\{y_1,z_1\} \in E(G)$.
		Therefore, $z'_1 \leq z_1$. The ordering of $Y$ and the convexity
		of $X$ imply that $N(z_1) \supseteq N(z'_1)$.
		\\\textbf{Induction Hypothesis:}
		\\Assume for $ i\geq2$, $z'_{i-1}\leq z_{i-1}$ and
		$N(\{z_1,z_2,\ldots,z_{i-1}\})\supseteq
		N(\{z_1^{'},z_2^{'},\ldots,z_{i-1}^{'}\})$ are true.
		\\\textbf{Induction Step:}
		\\We have to prove that when $ i\geq2$, $z'_i\leq z_i$ and
		$~N(\{z_1,z_2,\ldots,z_{i}\})\supseteq N(\{z_1^{'},z_2^{'},\ldots,z_{i}^{'}\})$.
		\\By the induction hypothesis, we know that up to $i-1$, $z'_{i-1}\leq
		z_{i-1}$ and $N(\{z_1,z_2,\ldots,z_{i-1}\})\supseteq
		N(\{z_1^{'},z_2^{'},\ldots,z_{i-1}^{'}\})$.
		\\By Steps 5 and 14 of Algorithm \ref{algo3}, it is clear that our algorithm
		always includes $z_i=r(y)$ of an interval $y$, hence $z'_i\leq z_i$.
		\\Assume on the contrary, $N(\{z_1,z_2,\ldots,z_{i}\})\nsupseteq
		N(\{z_1^{'},z_2^{'},\ldots,z_{i}^{'}\})$.  Then, there exists an interval $y$
		such that $y\in N(z'_i)$ and $y\notin N(z_i)$.  It is clear that $r(y)\prec z_i$
		(Recall that $r(y)$ appears before $z_i$ in the ordering).  Since for each
		interval $w$ our algorithm includes some $x\in N(w)$ in the solution, it must be
		the case that $y\in N(z_j)$ for some $j$, $1\leq j\leq i-1$ (as illustrated in
		Figure \ref{rey}). This implies that $y\in N(\{z_1,z_2,\ldots,z_{i-1}\})$, which
		is a contradiction.  \\
		\begin{figure}[H]
			\begin{center} 
				\begin{tikzpicture}
					\draw[] (0,0)--(5,0);
					\draw[] (1,0.5)--(4,0.5);
					\draw[] (0,1)--(1.5,1);
					\draw[] (2,1)--(5,1);
					\draw[dashed] (0.5,-0.5)--(0.5,1.5);
					\draw[dashed] (1.5,-0.5)--(1.5,1.5);
					\draw[dashed] (3.5,1.5)--(3.5,-0.5);
					\draw[dashed] (5,1.5)--(5,-0.5);
					\tikzstyle {v1}=[circle, inner sep=0pt, minimum
					size=0.75ex,draw=white,fill=white]
					\node [v1](1)[label=right:$y$] at (4,0.5){};
					\node [v1](2)[label=above:$z^{'}_{i-1}$] at (0.5,1.5){};
					\node [v1](3)[label=above:$z_{i-1}$] at (1.5,1.5){};
					\node [v1](2)[label=above:$z^{'}_{i}$] at (3.5,1.5){};
					\node [v1](2)[label=above:$z_{i}$] at (5,1.5){};
				\end{tikzpicture}
				\caption{Interval representation of G}\label{rey}
			\end{center}
		\end{figure}
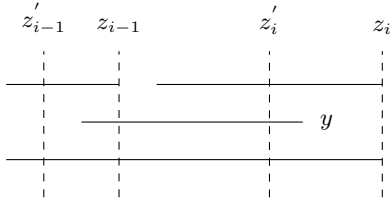
		Therefore,  $N(\{z_1,z_2,\ldots,z_{i}\})\supseteq
		N(\{z_1^{'},z_2^{'},\ldots,z_{i}^{'}\})$. Hence the proof.
		\qed	\end{proof}
	\begin{theorem}\label{trey2}
		For all $k \leq p$, the graph induced on $N[\{z_1,z_2,\ldots,z_{k}\}]$ is
		connected. 
	\end{theorem}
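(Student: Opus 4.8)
The plan is to prove the statement by induction on $k$, reducing connectivity of the prefix neighbourhood to a single \emph{common-neighbour} lemma. Writing $N[\{z_1,\dots,z_k\}] = N[\{z_1,\dots,z_{k-1}\}]\,\cup\,\big(\{z_k\}\cup N(z_k)\big)$, I observe that the second set induces a star centred at $z_k$ and is therefore connected, while by the induction hypothesis the first set induces a connected subgraph. Since the induced subgraph on the union of two connected induced subgraphs that share a vertex is again connected, it suffices to exhibit, for every $k\ge 2$, a vertex $y^\ast\in N(z_k)\cap N(z_j)$ for some $j<k$; such a $y^\ast$ lies in $N[\{z_1,\dots,z_{k-1}\}]$ and is adjacent to $z_k$, so the star attaches to the already-connected part. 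The base case $k=1$ is immediate, as $N[\{z_1\}]$ is a star.

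For the inductive step I would produce the common neighbour from a \emph{spanning interval}. Recall that the $z_i$ are increasing in the convex order and $z_{k-1}<z_k\le x_m$. Since $G$ is connected, the cut immediately to the right of $z_{k-1}$ must be crossed by some interval $w$, i.e.\ $l(w)\le z_{k-1}<r(w)$; otherwise every interval would lie entirely in $[x_1,z_{k-1}]$ or entirely to its right, disconnecting $z_{k-1}$ from $z_k$. Being an interval through $z_{k-1}$, $w$ was marked when $z_{k-1}$ was selected. The key \textbf{claim} I need is that some such spanning $w$ satisfies $r(w)\ge z_k$; granting this, $l(w)\le z_{k-1}<z_k\le r(w)$ forces both $z_{k-1}\in N(w)$ and $z_k\in N(w)$, so $y^\ast=w$ is the desired common neighbour.

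The crux, and the step I expect to be the main obstacle, is the claim $r(w)\ge z_k$, which is where the bridging test of Algorithm \ref{algo3} (the check whether a marked interval already contains $r_{i+1}$) must be used. I would argue by contradiction: suppose every interval spanning the cut after $z_{k-1}$ stops strictly before $z_k$, and set $R^\ast=\max\{\,r(w): l(w)\le z_{k-1}<r(w)\,\}<z_k$. First, no marked interval reaches beyond $R^\ast$: a marked interval contains some selected $z_j\le z_{k-1}$, so if it extended past $R^\ast$ it would itself span the cut with right endpoint exceeding $R^\ast$, contradicting maximality. Hence, just before $z_k$ is selected, every $X$-vertex greater than $R^\ast$ is uncovered by the marked intervals. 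Now follow the for-loop through the intervals whose right endpoint equals $R^\ast$, all of which are processed strictly before $z_k$ (as $R^\ast<z_k$): if any such interval were unmarked, the unmarked-interval branch would select $R^\ast$; and when the last marked interval with right endpoint $R^\ast$ is processed, its successor has right endpoint $>R^\ast$, which no marked interval contains, so the bridging test fails and $R^\ast$ is again selected. Either way the algorithm selects a vertex with $z_{k-1}<R^\ast<z_k$, contradicting that $z_k$ is the selection immediately following $z_{k-1}$. This establishes $r(w)\ge z_k$ and completes the induction. The delicate points to handle carefully are the tie-breaking among intervals sharing right endpoint $R^\ast$ and the boundary condition $R^\ast\ne x_m$ (which holds since $R^\ast<z_k\le x_m$), so that the $r_i=x_m$ shortcut is never triggered during this argument.
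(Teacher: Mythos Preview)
Your proof is correct and takes essentially the same inductive approach as the paper: both attach $\{z_k\}\cup N(z_k)$ to $N[\{z_1,\dots,z_{k-1}\}]$ by exhibiting an interval $w$ that contains both $z_{k-1}$ and $z_k$. The paper organizes the existence of $w$ as a two-case split on whether the interval $y$ with $r(y)=z_{i+1}$ that triggered the selection is marked (then $y$ itself is the bridge) or unmarked (then a marked $w$ through $z_{i+1}$ is asserted to exist ``since $G$ is connected''), whereas your $R^\ast$ contradiction argument handles both cases uniformly and supplies precisely the justification the paper leaves implicit in its Case~1.
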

	\begin{proof}
		By mathematical induction on $k$,
		\\\textbf{Base Case:} For $i=1$, by definition $G[N[z_1]]$ is connected.
		\\\textbf{Induction Hypothesis:} Assume that for $i \geq 1$,
		$G[N[\{z_1,z_2,\ldots,z_i\}]]$ is connected.  
		\\\textbf{Induction Step:} We have to prove that when $ i \geq 1,
		~G[N[\{z_1,z_2,\ldots,z_{i+1}\}]]$ is connected.  By our induction hypothesis,
		we know that $G[N[\{z_1,z_2,\ldots,z_i\}]]$ is connected. A vertex $z_{i+1}$ can
		be added to $S$ for two reasons: 
		\\\textbf{Case 1: There exists $y$ such that $r(y)=z_{i+1}$ and $y$ is
			unmarked.} 
		As per Step 6 of our algorithm, $z_{i+1}$ is included in the solution and $y$
		is labelled as marked.  Since $G$ is connected, there exists a marked interval
		$w\in Y$ adjacent to $z_{i+1}$ such that $z_{i+1}=r(w)$ or $z_{i+1}\prec r(w)$. 
		Therefore, graph induced on $N[\{z_1,z_2,\ldots,z_i\}] \cup \{z_{i+1},w,y\}$ is
		connected.
		(Inclusion of $z_{i+1}$ as per the illustration in Figure \ref{rey21})
		\begin{figure}[H]
			\begin{center}
				\begin{tikzpicture}
					\tikzstyle {v1}=[circle, inner sep=0pt, minimum
					size=0.75ex,draw=white,fill=white]
					\draw[] (1,2.5)--(2,2.5);
					\draw[] (2,2)--(3,2);
					\draw[] (2,1.5)--(4,1.5);
					\draw[] (3,1)--(5.5,1);
					\draw[] (5,0.5)--(5.5,0.5);
					\draw[blue,dashed] (2,3)--(2,0);
					\draw[blue,dashed] (4,2)--(4,0);
					\draw[blue,dashed] (5.5,1)--(5.5,0);
					\tikzstyle {v1}=[circle, inner sep=0pt, minimum
					size=0.75ex,draw=white,fill=white]
					\node [v1](1)[label=below:$z_{i-1}$] at (2,0){};
					\node [v1](2)[label=below:$z_{i}$] at (4,0){};
					\node [v1](3)[label=below:$z_{i+1}$] at (5.5,0){};
					\node [v1](3)[label=right:$y$] at (5.5,0.5){};
					\node [v1](3)[label=right:$w$] at (5.5,1){};
				\end{tikzpicture}
				\caption{Interval representation of G} \label{rey21}
			\end{center}
		\end{figure}
		\noindent\textbf{Case 2: There exists $y$ such that $r(y)=z_{i+1}$ and $y$ is
			marked.} 
		It must be the case that there exists an unmarked $w\in Y$ which is adjacent
		to $z_{i+1}$ and not adjacent to $z_i$. To ensure connectedness between
		$G[N[\{z_1,z_2,\ldots,z_i\}]]$ and $w$, our algorithm chooses $z_{i+1}$. Since
		$y$ is ending at $z_{i+1}$, then $y$ is adjacent to one of $z_1,z_2,\ldots,z_i$.
		Therefore, graph induced on $N[\{z_1,z_2,\ldots,z_i\}] \cup \{z_{i+1},w,y\}$
		is connected.(Inclusion of $z_{i+1}$ as per the illustration in Figure
		\ref{rey2})
		\begin{figure}[H]
			\begin{center}
				\begin{tikzpicture}
					\draw[] (1,2.5)--(2,2.5);
					\draw[] (2,2)--(3,2);
					\draw[] (2,1.5)--(4,1.5);
					\draw[] (3,1)--(5.5,1);
					\draw[] (5,0.5)--(5.5,0.5);
					\draw[blue,dashed] (2,3)--(2,0);
					\draw[blue,dashed] (4,2)--(4,0);
					\draw[blue,dashed] (5.5,1)--(5.5,0);
					\tikzstyle {v1}=[circle, inner sep=0pt, minimum
					size=0.75ex,draw=white,fill=white]
					\node [v1](1)[label=below:$z_{i}$] at (2,0){};
					\node [v1](2)[label=below:$z_{i+1}$] at (4,0){};
					\node [v1](3)[label=below:$z_{i+2}$] at (5.5,0){};
					\node [v1](3)[label=right:$y$] at (4,1.5){};
					\node [v1](3)[label=right:$w$] at (5.5,1){};
				\end{tikzpicture}
				\caption{Interval representation of G} \label{rey2}
			\end{center}
		\end{figure}
		Therefore, by both Case 1 and Case 2, $G[N[\{z_1,z_2,\ldots,z_{i+1}\}]]$ is
		connected.\qed
	\end{proof}
	\begin{theorem}
		Algorithm \ref{algo3} outputs a minimum Steiner set, that is $p=q$.
	\end{theorem}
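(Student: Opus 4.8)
The plan is to prove $p=q$ by establishing the two inequalities $q\le p$ and $p\le q$ separately, and the order matters: I would settle $q\le p$ first so that Theorem~\ref{trey1} (whose statement tacitly needs $z_q$ to exist) becomes applicable. To get $q\le p$ I would show that the algorithm's output $Z$ is itself a feasible Steiner set, so that optimality of $Z'$ forces $|Z'|\le|Z|$. Feasibility has two parts. For coverage, observe that the for-loop processes every interval $y_i$; whenever $y_i$ is encountered unmarked, Step~5 selects $r_i$ and Step~6 marks every interval containing $r_i$, in particular $y_i$ itself, so after the loop every interval is marked, i.e.\ $Y\subseteq N(Z)$ (every terminal is adjacent to some selected vertex). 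For connectivity, $G[N[Z]]$ is connected by Theorem~\ref{trey2} with $k=p$. Hence $Y\cup Z$ induces a connected subgraph, $Z$ is feasible, and $q\le p$; in particular $z_q$ is defined.

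For the reverse inequality, the idea is that the algorithm covers everything the optimum covers using no more selections than the optimum, so it must finish within $q$ selections. Since the optimal set $Z'=\{z'_1,\ldots,z'_q\}$ is feasible, the same coverage argument gives $Y\subseteq N(Z')=N(\{z'_1,\ldots,z'_q\})$. Applying Theorem~\ref{trey1}(2) with $i=q$ then yields $N(\{z_1,\ldots,z_q\})\supseteq N(\{z'_1,\ldots,z'_q\})\supseteq Y$. Thus at the moment the algorithm makes its $q$-th selection $z_q$, every interval already contains some $z_j$ with $j\le q$ and is therefore marked.

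It then remains to verify the crucial sub-claim, which is the step I expect to be the main obstacle because it requires a careful reading of the conditional in Steps~8--14: \emph{once every interval is marked, the algorithm performs no further selection.} In any later iteration $i$, the interval $y_i$ is marked, so the Step~5 branch is skipped. If $r_i=x_m$ the algorithm continues without selecting. Otherwise it tests whether some marked interval $y_j$ satisfies $r_{i+1}\in N(y_j)$; taking $j=i+1$, the endpoint $r_{i+1}$ lies in its own interval $y_{i+1}$, which is marked, so the condition holds and the algorithm continues without selecting (the terminal iteration $i=n$ is benign since $G$ is connected, so $x_m$ lies in some interval and $r_n=x_m$). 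Consequently no selection occurs after the $q$-th one, giving $p\le q$. Combining this with $q\le p$ yields $p=q$, so the set returned by Algorithm~\ref{algo3} is a minimum Steiner set.
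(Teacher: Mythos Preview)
Your proof is correct and follows the same skeleton as the paper's: invoke Theorem~\ref{trey1} at $i=q$ to get $N(\{z_1,\ldots,z_q\})\supseteq N(\{z'_1,\ldots,z'_q\})=Y$ and Theorem~\ref{trey2} for connectivity. The paper's argument is in fact much terser and leaves implicit both points you make explicit --- that $q\le p$ (so $z_q$ exists and Theorem~\ref{trey1} applies) and, crucially, that once every interval is marked the conditional in Steps~8--14 never triggers another selection --- so your write-up is a genuine tightening of the paper's proof rather than a departure from it.
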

	\begin{proof}
		By Theorem \ref{trey1}, we know that if $i=q$, then
		$N(\{z_1,z_2,\ldots,z_q\})\supseteq N(\{z^{'}_1,z^{'}_2,\ldots,z^{'}_q\})$.  By
		Theorem \ref{trey2}, $N[\{z_1,z_2,\ldots,z_q\}]$ is connected. Hence $p=q$.
		\qed
	\end{proof}
	\noindent\textbf{Time complexity analysis:} For vertices in $Y$, we can maintain an additional data structure so that for each $y\in Y$, $l(y)$ and $r(y)$ can be computed in linear time.  Further, using this data structure and adjacency list of the underlying graph Algorithms 1,2, and 3 can be implemented in $O(m+n)$, linear in the input size.
	\subsection{STREE when $R\subset Y$} \label{rsuby}
	We shall present a dynamic programming based solution for the case $R\subset Y$.
	Let $\sigma=(y_1,y_2,\ldots,y_n)$ be the ordering of vertices in $Y$ satisfying
	the following conditions; for all $i,j$, $1\leq i \le j \leq n$, $y_i$ appears
	before $y_j$ in $\sigma$, if
	\\(i)  $l_i< l_j$, or 
	\\(ii) $l_i= l_j$ and $r_i\geq r_j$.  
	\\We denote by $\sigma(y_i)<\sigma(y_j)$, if $y_i$ appears before $y_j$ in
	$\sigma$.  Similar to Section $3.3$, in this section we work with the underlying
	interval representation of $G$.  Recall that for $y_i\in Y$,
	$N(y_i)=\{x_p,x_{p+1},\ldots,x_q\}$, $l_i=p$ and $r_i=q$.  For $z\in Y$,
	$N(z)=\{x_p,x_{p+1},\ldots,x_q\}$ such that $l(z)=x_p$ and $r(z)=x_q$, we denote
	by $l_z=p$ and $r_z=q$.  Let $R=\{z_1,z_2,\ldots,z_k\}\subset Y$ such that
	$\sigma(z_1)<\sigma (z_2)<\ldots<\sigma(z_k)$. 
	For $z_k\in R$, $x_u=l(z_k)$ and $u=l_k$, and  similarly $x_v=r(z_k)$ and
	$v=r_k$. 
	Let $x_r=l(z_1)$, and $W=\{w_1,w_2,\ldots,w_t\}=\{x_r,\ldots,x_m\},~t=m-r+1$. 
	Note that $x_1,\ldots,x_{r-1}$ is not considered for our discussion, since no
	$z\in R$ is adjacent to $x_1,\ldots,x_{r-1}$. Therefore, we work with $W$ and
	$Y$.  Further, for $y\in Y$, we remove the edges $\{y,x_i\}\in E(G),1\leq i \leq
	r-1$.
	Let $S$ be the set of Steiner vertices required to connect $R$ in $G$.  
	\\We classify $I=[G,R=\{z_1,z_2,\ldots,z_k\}]$ into four equivalence classes
	which are defined as follows;
	\\$E_1=\{I~|~\exists y_c \mbox{ such that }y_c\in N(w_{u-1})\mbox{ and }r_c\geq
	r_k \mbox{, and } \nexists y_d \mbox{ such that } y_c\neq y_d,~ y_d\in
	N(w_{u-1})\mbox{ and }l_k\leq r_d< r_k\}$.
	\\$E_2=\{I~|~\exists y_d \mbox{ such that }y_d\in N(w_{u-1})\mbox{ and }l_k\leq
	r_d< r_k \mbox{, and } \nexists y_c \mbox{ such that }y_c\neq y_d,~y_c\in
	N(w_{u-1}) \mbox{ and } r_c\geq r_k\}$.
	\\$E_3=\{I~|~\exists y_c \mbox{ such that }y_c\in N(w_{u-1})\mbox{ and }r_c\geq
	r_k \mbox{, and }\exists y_d \mbox{ such that }y_d\neq y_c,~y_d\in
	N(w_{u-1})\mbox{ and }l_k\leq r_d< r_k\}$
	\\$E_4=\{I~|~ l_k=1\}$
	\\Informally, $E_1$ considers all those inputs such that in the underlying
	interval representation there exists an interval $y_c$ which overlaps with
	$z_k$, adjacent to $l_{k}-1$ and it ends on or after $r_k$, further, there does
	not exist an interval $y_d$ which overlaps with $z_k$, adjacent to $l_{k}-1$ and
	it ends before $r_k$.
	\\Similarly, $E_2$ considers all those inputs such that in the underlying
	interval representation there exists an interval $y_d$ which overlaps with
	$z_k$, adjacent to $l_{k}-1$ and it ends before $r_k$, further, there does not
	exist an interval $y_c$ which overlaps with $z_k$, adjacent to $l_{k}-1$ and it
	ends on or after $r_k$.
	\\Likewise, $E_3$ considers all those inputs such that in the underlying
	interval representation there exists an interval $y_d$ which overlaps with $z_k$
	and it ends before $r_k$, and there exists an interval $y_c$ which overlaps with
	$z_k$ and it ends on or after $r_k$.
	\\
	In $E_4$, we consider all intervals such that $l_k=1$.  This means each $z_i\in
	R$ is adjacent to $x_1$.
	\\
	Note that, $E_1$, $E_2$, $E_3$, and $E_4$ clearly partitions the set of all
	inputs.
	\\We define an indicator function $b(y)$ for each $y\in Y$ such that:
	\\$$b(y)=
	\begin{cases*}
		1 & \mbox{if $y\in Y\setminus R$}\\
		0 & \mbox{if $y\in R$}
	\end{cases*}$$
	\\Note that $b(z_1)=b(z_2)=\ldots=b(z_k)=0$.
	
	\noindent\textbf{Optimal Substructure Property:}
	We now show that an optimal solution to the Steiner tree problem for the case $R
	\subset Y$ lies within its optimal solutions to subproblems.   Let $T$ be an
	optimal Steiner tree containing $R$.   Clearly, each $z \in R$ appears as a leaf
	in $T$.  Let $w$ be a parent of $z_k$.   If we root the tree at $w$, then both
	left and right subtrees of $w$ must be optimal.  Note that the optimal right
	subtree contains each $z_i,~1 \leq i \leq k-1$ as a leaf.  Further, $w$ is in
	$X$ and $w$ is adjacent to $z_k$ and  $y \in Y$.    Note that $y$ is $z_i,~1
	\leq i \leq k-1$ or $y \in Y \setminus R$.    Moreover, there are many
	candidates for $y$ whose corresponding intervals overlap with $z_k$.   Clearly,
	if all choices of $y$ are considered, then we are sure of obtaining an optimal
	$y$ using which $z_k$ is connected with the rest of vertices in $R$. \\
	Using our optimal substructure, we define a function $F$ which computes the
	minimum number of Steiner vertices required to connect $z_k$ with the rest of
	$R$.   If $z_k$ overlaps with some $z_i,~1 \leq i \leq k-1$, then to obtain an
	optimal solution to the problem we include the appropriate $x \in N_G(z_k) \cap
	N_G(z_i)$ and the optimal solutions obtained from the subproblems.  If $z_k$ has
	no overlap with any $z_i,~1 \leq i \leq k-1$, then $z_k$ overlaps with $y \in Y
	\setminus R$ and there may be many such $y$.  To obtain an optimal solution to
	the problem we include the appropriate $y$ and $x \in N_G(z_k) \cap N_G(y)$ and
	the optimal solutions obtained from the subproblems.    We now present our
	recursive solution to compute $F$.  \\
	\\We define a function $F[u,v]$ which denotes the number of Steiner vertices
	required in $G$ to connect $z_k\in R$ with $z_i\in R,~1\leq i\leq k-1$.
	\\The function $F[u,v]$ for $I$ in $E_1$ or $E_2$ or $E_3$ or $E_4$ is defined
	as follows: $F[u,v]=\min\limits_z f[u,v]$, for each $z\in Y$ such that $u=l_z$
	and $v=r_z$, where $f[u,v]$ is defined as follows:\\
	\\\emph{Case 1:} $I\in E_1$.  Then, $\exists y_c \mbox{ such that }y_c\in
	N(w_{u-1})\mbox{ and }r_c\geq r_k \mbox{, and } \nexists y_d \mbox{ such that }
	y_c\neq y_d,~ y_d\in N(w_{u-1})\mbox{ and }l_k\leq r_d< r_k$.\\
	\\$f[u,v]$= $1+ \min\limits_{y_c} F[p,q],~ 1\leq p \leq u-1,~v\leq q \leq t$\\
	\\\emph{Case 2:} $I\in E_2$.  Then, $\exists y_d \mbox{ such that }y_d\in
	N(w_{u-1})\mbox{ and }l_k\leq r_d< r_k \mbox{, and } \nexists y_c \mbox{ such
		that }y_c\neq y_d,~y_c\in N(w_{u-1}) \mbox{ and } r_c\geq r_k$.\\
	\\$f[u,v]$= $1+b(z_k)+ \min\limits_{y_d} F[p,s],~ 1\leq p \leq u-1,~u\leq s \leq
	v-1$\\
	\\\emph{Case 3:} $I\in E_3$.  Then, $\exists y_c \mbox{ such that }y_c\in
	N(w_{u-1})\mbox{ and }r_c\geq r_k \mbox{, and }\exists y_d \mbox{ such that
	}y_d\neq y_c,~y_d\in N(w_{u-1})\mbox{ and }l_k\leq r_d< r_k$.\\
	\\$f[u,v]$= $\min\{1+ \min\limits_{y_c} F[p,q],1+b(z_k)+ \min\limits_{y_d}
	F[p,s] \},~ 1\leq p \leq u-1,~v\leq q \leq t,~u\leq s \leq v-1$
	\\
	\\\emph{Case 4:} $I\in E_4$\\
	\\$F[u,v]$=1, since $l_k=1$, for each $z_i\in R$, $l_i=1$.\\
	\\Note that, when the input comes from equivalence class $E_i$, there may be
	many identical intervals of type $z$ such that $l_z=u$ and $r_z=v$.  Further, we
	compute $f[u,v]$ for each interval $z$, and $F[u,v]$ is precisely the minimum
	among $f[u,v]$. 
	\\We observe that $F[u,v]$ depends on $F[p,q]$ or $F[p,s]$.  The above
	definition has overlapping subproblems which we shall exploit and present a
	solution using dynamic programming paradigm.  Towards this end we now define a
	recursive solution using which we populate the dynamic programming table in a
	bottom-up.\\
	\\
	\\\textbf{Recursive solution:} \\
	\\\textbf{Base case:}\\
	For $z\in Y$, $l_z=1,~j=r_z$, we define $F[1,j]=\min\limits_z f[1,j],~1\leq
	j\leq t$, the value of $f[1,j]$ is\\
	$f[1,j]=1$, if $z\in Y\setminus R$.\\
	$f[1,j]=0$, if $z\in R$.\\
	$f[1,j]=\infty$, if no such $z$ exists.\\
	\\\\For $2\leq i \leq j \leq t$, for each $z\in Y$, $i=l_z$ and $j=r_z$,
	$F[i,j]=\min\limits_z f[i,j]$, where $f[i,j]$ is
	\\
	\\Case 1: $\exists y_c$ such that $y_c\in N(w_{i-1})$ and $r_c\geq j$, and
	$\nexists y_d$ such that $y_d\neq y_c$, $y_d\in N(w_{i-1})$ and $i\leq r_d< j$\\
	\\$f[i,j]=1+ \min\limits_{y_c} F[p,q],~ 1\leq p \leq i-1,~j\leq q \leq t$
	\\
	\\Case 2: $\exists y_d$ such that $y_d\in N(w_{i-1})$ and $i\leq r_d< j$, and
	$\nexists y_c$ such that $y_c\neq y_d$, $y_c\in N(w_{i-1})$ and $r_c\geq j-1$
	\\\\$f[i,j]= 1+b(z_k)+ \min\limits_{y_d} F[p,s],~ 1\leq p \leq i-1,~i\leq s \leq
	j-1 $
	\\
	\\Case 3: $\exists y_c$ such that $y_c\in N(w_{i-1})$ and $r_c\geq j$, and
	$\exists y_d$ such that $y_d\neq y_c$, $y_d\in N(w_{i-1})$ and $i\leq r_d< j$
	\\\\$f[i,j]= \min\{1+ \min\limits_{y_c} F[p,q], 1+b(z_k)+ \min\limits_{y_d}
	F[p,s]\},~ 1\leq p \leq i-1,~j\leq q \leq t,~i\leq s \leq j-1$\\
	\\The function $F[i,j]=\infty$, if no such $z$ exist.\\
	\\
	\noindent\textbf{Computation of $F[i,j]$:}
	We know that for each interval $z \in Y$, the corresponding function $f[i,j]$ is
	computed.  We compute $f[i,j]$ as per the ordering $\sigma$.   That is, for two
	intervals $y_a$ and $y_b$ such that $\sigma(y_a)<\sigma(y_b)$, then $F[l_a,r_a]$
	is computed first followed by $F[l_b,r_b]$.  We compute $F[i,j]$ for each
	interval $y_a \in Y$ such that $i=l_a$ and $j=r_a$.  The value of $F$ depends on
	the case (the above three cases) in which $y_a$ falls in.  Thus, we consider the
	following three cases and describe how $F$ is computed in each of them. \\
	\\\emph{Case 1:} Note that in this case, we consider all interval $y \in Y$ such
	that $y$ overlaps with $y_a$, and $l_y < l_a$ and $r_y \geq r_a$.  As per
	$\sigma$, for each $y$, we compute $f[l_y,r_y]$.  Since $G$ is connected, $y_a$
	is connected with some $y$.  We examine each $y$ and choose $y$ for which
	$f[l_y,r_y]$ is minimum.  Let $y_{min}=y \in Y$ is such that $f[l_y,r_y]$ is
	minimum.  Clearly, some $x \in N(y_a) \cap N(y_{min})$ is in the solution to
	connect $y_a$ and $y_{min}$.  As part of our approach, we include $x_{l_a} \in
	N(y_a) \cap N(y_{min})$ in our solution.  Thus, we obtain $f[l_a,r_a]=1+
	\min\limits_{y} F[l_y,r_y]$.  The $'1'$ in the expression indicates the
	inclusion of $x_{l_a}$ in the solution, further, it is connected with a $y$
	vertex as indicated by the recursive solution $F[l_y,r_y]$ in the expression. 
	In this case, we do not include $y_a$ in the solution.  Finally, we consider all
	$y_a$ and for each we compute $f[l_a,r_a]$, the minimum over all $f[l_a,r_a]$ is
	precisely $F[l_a,r_a]$.  An illustration is given in Figure \ref{rsy1}.
	\begin{figure}[H]
		\begin{minipage}{0.33\textwidth}
			\includegraphics[scale=0.8]{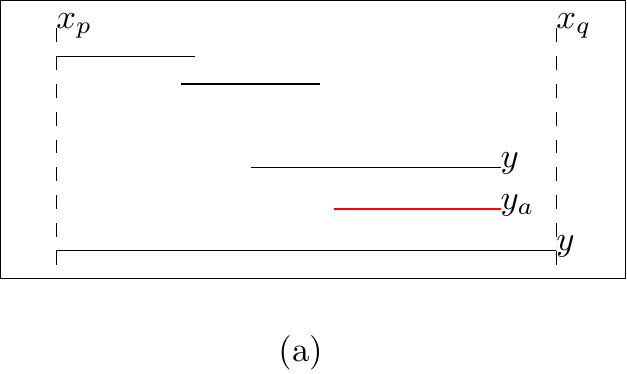}
		\end{minipage}
		\begin{minipage}{0.27\textwidth}
			\includegraphics[scale=0.8]{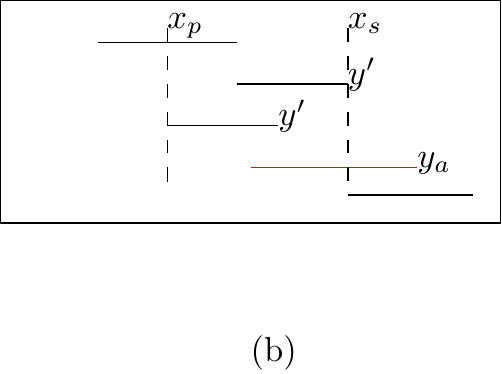}
		\end{minipage}
		\begin{minipage}{0.35\textwidth}
			\includegraphics[scale=0.8]{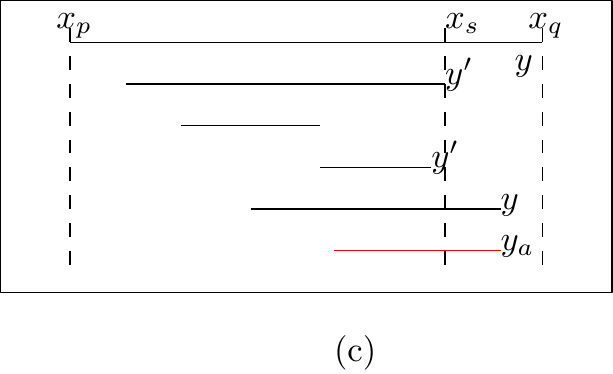}
		\end{minipage}
		\caption{(a) An instance of $G$ for Case 1, (b) An instance of $G$ for Case 2,
			(c) An instance of $G$ for Case 3}\label{rsy1}\label{rsy2}\label{rsy3}
	\end{figure}
	\noindent\emph{Case 2:} 
	Note that in this case, we consider all interval $y' \in Y$ such that $y'$
	overlaps with $y_a$, and $l_y < l_a$ and $r_y < r_a$.   The description for
	computation of $f[l_a,r_a]$ is same as Case 1 and the only change is that
	$f[l_a,r_a]$ includes $x_{l_a}$ and $y_a$ in the solution.  Thus, we obtain
	$f[l_a,r_a]=1+b(y_a) + \min\limits_{y} F[l_y,r_y]$.  The value of $b(y_a)$ is
	'0', if $y_a \in R$, and '1', otherwise.  An illustration is given in Figure
	\ref{rsy2}.
	\\\emph{Case 3:} This case is a blend of Case 1 and 2.  With $y_a$ being the
	reference interval, we find two intervals $y$ and $y'$ in $Y$ such that $y$
	satisfies Case 1 and $y'$ satisfies Case 2.  Accordingly, we compute $F$ for $y$
	and $y'$ and take the minimum of the two.  An illustration is given in Figure
	\ref{rsy3}.
	\\\emph{Case 4:}
	Since $\forall z\in R$, $l(z)=1$, including $w_1$ will connect all the vertices
	in $R$.  Hence $F[i,j]=1$.
	\\
	\\\textbf{Overlapping subproblems in $F$:}
	\\Consider the subproblems $F[p,q],~F[a,b],~F[c,d]$ such that $p<a$, $p<c$ and
	$q\geq a,~q\geq c$.  Since we compute $F[i,j]$ in bottom-up and $p<a$, $p<c$,
	$F[p,q]$ is computed before $F[a,b]$ and $F[c,d]$.  We observe that $F[p,q]$ is
	a  subproblem in $F[a,b]$ and $F[c,d]$.  We compute $F[p,q]$ once and reuse the
	solution when it is referred again.  Therefore each subproblem is computed
	exactly once.\\
	\textbf{Computing the optimal Steiner set using $F$:}\\
	Using $F[u,v]$, we construct the solution set $S$ in a bottom up starting from
	minimum $f[u,v]$.  If suppose, $f[u,v]$ is updated due to $f[p,q]$ of $F[p,q]$,
	then include the vertex $w_u$, $w_p$ in $S$, and also include the corresponding
	$y$ vertex in $S$.  We continue this process until we reach either some
	$f[1,j],~1\leq j \leq n$ or $f[p,q]$ such that $p=l_{{z_1}}$.  If there exist
	$z_i$ such that $N(z_i)\cap S=\emptyset$ and there does not exist $z_j$ such
	that $\sigma(z_i)<\sigma(z_j)$, $N(z_i)\cap N(z_j)\neq \emptyset$, then include
	$l(z_i)\in R$ (An illustration for inclusion of $z_i$ in $S$ is in Figure
	\ref{$z_i$}).
	The vertices $S \setminus R$ are the desired Steiner vertices of $G$ for the
	terminal set $R$.
	\\An illustration for inclusion of $z_i$ in $S$, for $z_3,z_4,z_5$ we include
	$l(z_4),l(z_5)$ in $S$.
	\begin{figure}[H]
		\begin{center}
			\includegraphics[scale=0.8]{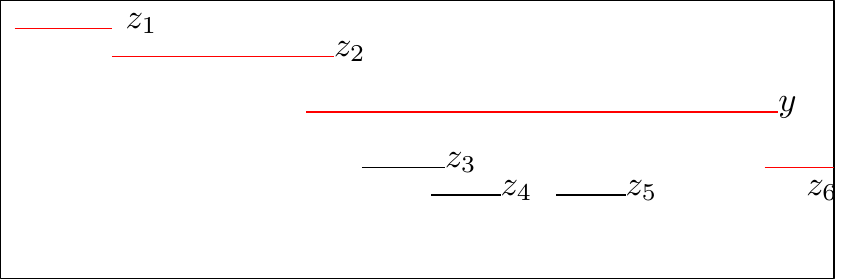}
			\caption{An illustration for inclusion of $z_i$ in $S$.}\label{$z_i$}
		\end{center}
	\end{figure}
	\noindent\textbf{Pseudo code to compute $F$:}
	\begin{algorithm}[H]
		\caption{{\em Computing $F$}}
		\label{algo4}
		\begin{algorithmic}[1]
			\State{{\tt Input:}  A connected convex bipartite graph $G$ with $R\subset
				Y$.}
			\State{$oldFvalue=\infty$ , $newFvalue=\infty$}
			\For{each $z\in Y$ such that $l_z=1$ and $r_z=j$}
			\If{$z\in Y\setminus R$}
			\EndIf
			\State{$F[1,j]=1$}
			\If{$z\in R$}
			\State{$F[1,j]=0$}
			\EndIf
			\EndFor
			\For{$j=1,j\leq m,j=j+1$}
			\If{there does not exists a vertex such that $l_z=1$ and $r_z=j$}
			\State{$F[1,j]=\infty$}
			\EndIf
			\EndFor
			\For{$i=2,i\leq t,i=i+1$}
			\For{each $z\in Y$ such that $l(z)=x_i$}
			\State{let $j=r(y)$}
			\If{$\exists y_c$ such that $y_c\in N(w_{i-1})$ and $r_c\geq j$, and $\exists
				y_d$ such that $y_d\neq y_c$, $y_d\in N(w_{i-1})$ and $i\leq r_d< j$}
			\State{$f[i,j]= \min\{1+ \min\limits_{y_c} F[p,q], 1+b(z_k)+ \min\limits_{y_d}
				F[p,s]\},~ 1\leq p \leq i-1,~j\leq q \leq t,~i\leq s \leq j-1$}
			\ElsIf{$\exists y_d$ such that $y_d\in N(w_{i-1})$ and $i\leq r_d< j$}
			\State{$f[i,j]= 1+b(z_k)+ \min\limits_{y_d} F[p,s],~ 1\leq p \leq i-1,~i\leq s
				\leq j-1 $}
			\Else
			\State{$f[i,j]=1+ \min\limits_{y_c} F[p,q],~ 1\leq p \leq i-1,~j\leq q \leq
				t$}
			\EndIf
			\State{$newFvalue=\min\{f[i,j],oldFvalue\}$}
			\State{$oldFvalue=newFvalue$}
			\EndFor
			\State{$F[i,j]=newFvalue$}
			\EndFor
		\end{algorithmic}
	\end{algorithm}
	\noindent\textbf{Time complexity of the function $F$:}
	\\As the range of $i$ ($j$) is $1$ to $m$ and for each $z \in Y$, we compute the
	function $F$,  the number of subproblems $F$ created by our dynamic programming
	is at most $O(m^2)$.  Further, the number of updates on $F$ is $O(m^2n)$.  
	Thus, Steiner tree when $R \subset Y$ runs in $O(m^2n)$, polynomial in the input
	size.
	\begin{theorem}
		For a convex bipartite graph $G$ and a terminal set $R \subset V(G)$, the
		Steiner set output by our algorithm is an optimal Steiner set.
	\end{theorem}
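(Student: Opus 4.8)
The plan is to prove correctness in two movements: first that the recurrence $F[u,v]$ computes the optimal number of Steiner vertices needed to attach $z_k$ to the remaining terminals, and second that the trace-back, together with the final patching step for intervals $z_i$ that overlap nothing, yields a feasible (connected) Steiner set whose size matches $F$. Throughout I would fix the interval ordering $\sigma$ and read each subproblem $F[u,v]$ as ``the minimum cost to connect the terminal $z$ with $l_z=u$, $r_z=v$ to the terminals lying to its left in $\sigma$, inside the pruned graph on $W\cup Y$.''

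For the optimal substructure, I would take an optimal Steiner tree $T$ containing $R$, observe (as already noted) that each $z\in R$ is a leaf, pick the parent $w\in X$ of $z_k$, and root $T$ at $w$. Cutting at $w$ separates the instance into subtrees, each of which must itself be optimal, for otherwise replacing a suboptimal subtree by a cheaper one would contradict optimality of $T$; this is the cut-and-paste step and it is what licenses the recursion. The connection from $z_k$ into the rest of $R$ uses a single vertex $x\in N(z_k)\cap N(y)$ for some interval $y$ overlapping $z_k$, where $y$ is either a terminal $z_i$ with $i<k$ or a non-terminal of $Y\setminus R$. The four equivalence classes $E_1$--$E_4$ exactly enumerate, according to whether such a $y$ ends on or after $r_k$ (the case $y_c$) or strictly before $r_k$ (the case $y_d$), the shapes this connection can take.

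I would then prove, by induction on the position of the reference interval in $\sigma$ (equivalently on $l_z$), that $f[u,v]$, and hence $F[u,v]=\min_z f[u,v]$, equals the optimum of the corresponding subproblem. The base case is $l_z=1$, where the pruned graph forces every relevant interval to contain $w_1$, so a single vertex connects everything and $f[1,j]\in\{0,1\}$ according to $b(\cdot)$. In the inductive step I would verify each of Cases 1--3: the ``$+1$'' accounts for the connecting vertex (taken to be $x_{l_a}$, i.e. $w_u$, without loss of generality), the summand $b(z_k)$ records whether the overlapping non-terminal $y_d$ must itself be paid for, and the terms $\min_{y_c}F[p,q]$ and $\min_{y_d}F[p,s]$ select the cheapest already-optimal subproblem over the admissible index ranges. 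The crucial exchange argument is to show that restricting the connecting vertex to $w_u$ loses nothing: any optimal solution connecting $z_k$ through some other $x\in N(z_k)$ can be rewritten to use $w_u$, because by convexity of $X$ the vertex $w_u$ lies in the intersection of the relevant neighborhoods, so sliding the chosen vertex leftward to $w_u$ preserves connectedness without increasing cost.

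Finally I would argue feasibility and exactness of the reconstructed set. Tracing back from the minimizing $f[u,v]$ through the updating subproblems adds exactly the vertices counted by $F$, so $|S\setminus R|=F$ at the top level. Connectedness follows because each recursive step glues $z_k$ to an overlapping interval through the included $x$-vertex, and the terminal-patching rule, namely including $l(z_i)$ for a terminal $z_i$ that overlaps no later terminal, restores connectivity for components the recursion would otherwise leave isolated, at precisely the cost already charged by the recurrence. I expect the main obstacle to be exactly the exchange argument of the inductive step: making rigorous that the specific vertex $w_u$ may always replace an arbitrary optimal connector, and that this local replacement is consistent across all the nested subproblems so that the assembled global solution remains both connected and of minimum size.
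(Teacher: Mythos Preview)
Your proposal follows the same conceptual route as the paper---establish optimal substructure for the recurrence $F$ and then argue that the bottom-up computation and trace-back produce an optimal Steiner set---so there is no methodological divergence to report. What differs is the level of detail: the paper's own proof is a short paragraph that simply asserts that the equivalence class of $G$ is identified, that $F[i,j]$ is computed in an order making subsolutions available to later subproblems, and that therefore $F[u,v]$ is optimal; it does not carry out the induction on $\sigma$, does not isolate or prove the exchange argument for replacing an arbitrary connecting $x$-vertex by $w_u$, and does not separately verify feasibility of the reconstructed set or the correctness of the final patching step. Your outline therefore subsumes the paper's argument and fills in precisely the steps the paper leaves implicit; the ``main obstacle'' you flag (the exchange argument) is not addressed in the paper at all.

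One point to watch: you assert that the patching rule---adding $l(z_i)$ for a terminal $z_i$ with $N(z_i)\cap S=\emptyset$ and no later overlapping terminal---comes ``at precisely the cost already charged by the recurrence.'' The paper does not justify this either, and it is not obvious from the definition of $F$ that these extra $x$-vertices have been accounted for, since $F$ is indexed only by the endpoints of the reference interval and the recursion traces a single chain from $z_k$ back toward $z_1$. If you intend a rigorous proof you will need to argue explicitly either that such $z_i$ are already dominated by some $x$ added along the chain, or that the additional vertices are unavoidable in any feasible solution and hence do not affect optimality.
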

	\begin{proof}
		Let $R=\{z_1,\ldots,z_k\}$.  With $z_k$ as the reference interval, we first
		identify the equivalent class in which $G$ falls into.  Further, we compute
		$F[i,j]$ in a specific order so that solutions to subsubproblems are made
		available to the subproblems and further to the actual problem.   Thus, the optimal
		solution to $F[u,v]$ is obtained by considering all optimal subproblems. 
		Therefore, the set output by our algorithm is an optimal Steiner set.
		\qed
	\end{proof}
	\begin{figure}[H]
		\begin{center}
			\includegraphics[scale=0.9]{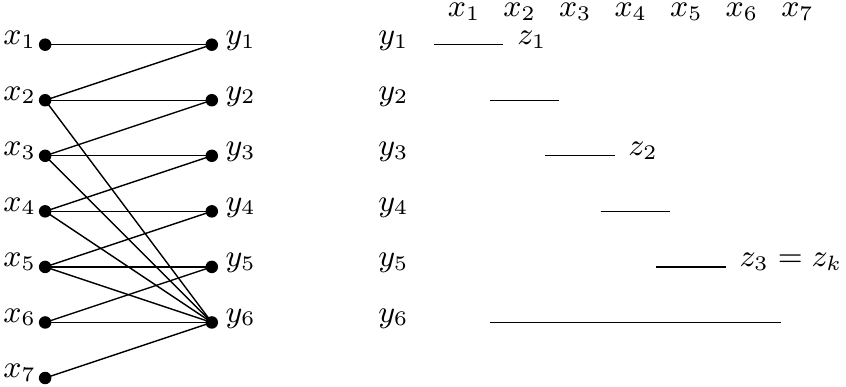}
			\caption{An illustration for $R\subset Y$}\label{rsy}
		\end{center}
	\end{figure}
	We now trace our algorithm for the example given in Figure \ref{rsy}. \\
	Base case: $F[1,2]=0$.
	\\At $i=2$, there exist two intervals $y_2,y_6$ which starts at $x_2$.  The $F$
	values computed are:
	\\for $y_2$, $F[2,3]=1+b(z)+\min{(F[1,2],F[1,3])}=1+1+0=2$
	\\for $y_6$,
	$F[2,7]=1+b(z)+\min{(F[1,2],F[1,3],F[1,4],F[1,5],F[1,6],F[1,7])}=1+1+0=2$.
	\\At $i=3$, there exists an interval $y_3$ which starts at $x_3$. The function
	$F$ is computed for
	\\$y_3$ is
	$F[3,4]=1+\min{(F[1,3],F[1,4],F[1,5],F[1,6],F[1,7],F[2,3],F[2,4],F[2,5],F[2,6],F[2,7])}=1+2=3$.
	\\At $i=4$, there exists an interval $y_4$ which starts at $x_4$. The function
	$F$ computed for
	\\$y_4$ is
	$F[4,5]=1+\min{(F[1,4],F[1,5],F[1,6],F[1,7],F[2,4],F[2,5],F[2,6],F[2,7],F[3,4],F[3,5],F[3,6],F[3,7])}=1+2=3$.
	\\At $i=5$, there exists an interval $y_5$ which starts at $x_5$. The function
	$F$ computed for
	\\$y_5$ is
	$F[5,6]=1+\min{(F[2,5],F[2,6],F[2,7],F[3,5],F[3,6],F[3,7],F[4,5],F[4,6],F[4,7])}=1+2=3$.
	\\\textbf{Constructing an optimal solution:} For this input instance $F[u,v]$ is
	$F[5,6]$.  The $'1'$ in the expression $F[5,6]$ indicates the inclusion of $x_5$
	in $S$, $S=x_5$.  Since the value of $F[5,6]$ is updated due to $F[2,7]$, we
	next consider $F[2,7]$.  Now in the expression $F[2,7]$, $'1'$ indicates the
	inclusion of $x_2$ in $S$, and $b(z)=1$, which refers to the inclusion of $y_6$
	in $S$, $S=\{x_5,x_2,y_6\}$.  On the similar line $F[2,7]$ is updated due to
	$F[1,2]$.  Thus include $x_2$ in $S$.  Finally, there exists an interval $z_2$
	such that $N(z_2)\cap S=\emptyset$, hence include $x_3$ in $S$,
	$S=\{x_5,x_2,y_6,x_3\}$.  Therefore, the Steiner vertices of $G$ is
	$\{x_2,x_3,y_6,x_5\}$.
	\subsection{STREE when $R \cap X\neq \emptyset$ and $R \cap Y\neq \emptyset $} \label{allinone}
	Let $R=\{z_1,\ldots,z_l\}$ such that $R \cap X=\{z_1,\ldots,z_k\},~1\leq k <l$, and $R \cap Y=\{z_{k+1},\ldots,z_l\}$.  To describe the
	solution for this case,
	we transform the graph $G(X,Y)$ to $G^*(X^*,Y^*)$ such that
	$X^*=X,Y^*=Y\cup W$, $ W=\{w_i~|~z_i\in R\cap X,~1\leq i \leq k$\} and
	$E(G^*)=E(G)\cup \{\{w_i,z_i\}|~w_i\in Y^*,~z_i\in R\cap X,~1\leq i \leq k\}$. 
	Note that each $w_i$ is a pendant vertex in $G^*$.  Observe that the convex
	ordering of $X^*$ is same as $X$, and for each $y\in Y^*$, $N_{G^*}(y)$ is consecutive
	with respect to the ordering of $X^*$.  Therefore, $G^*$ is a convex bipartite
	graph.  Moreover, this transformation is a solution preserving transformation.  That is, using the transformed graph $G^*$, we obtain a solution to STREE in $G$.  In particular
	$(G,R)$ is mapped to $(G^*,R^*)$ such that $R^*=(R\cap Y)\cup W$.
	\\Clearly, $R^*\subset Y^*$.  Using the dynamic programming presented in Section \ref{rsuby}, we solve $(G^*,R^*)$, and let $S^*$ be the solution to $G^*$.  Note that since each $w_i$ is pendant and $w_i\in R^*$.  Hence no $w_i$ is in $S^*$.   Therefore $S^*$ is also a solution in $G$.
	\noindent\emph{Remarks:} To solve STREE in convex bipartite graphs, it is enough to consider the case STREE when $R \subset Y$, and all other cases can be transformed to an instance of $R \subset Y$ using the construction given in Section \ref{allinone}.   It is important to highlight that $R \subset Y$ algorithm runs in time $O(m^2n)$, whereas all other greedy algorithms run in linear time.     
\section{Hardness Result: STREE in 1-star caterpillar convex bipartite graphs}
In this section, we show that the STREE is  NP-complete for 1-star caterpillar convex bipartite graphs by giving a polynomial-time reduction from the vertex cover problem on general graphs.  A 1-star caterpillar is a tree $T=(V,E)$ where $V(T)=\{v_1,\ldots,v_n,z_1,z_2,\ldots,z_n\}$ and $E(T)=\{\{v_i,z_i\}~|~1\leq i \leq n\}\cup \{\{v_i,v_{i+1}\}~|~1\leq i \leq n-1\}$.  Note that $(v_1v_2\ldots v_n)$ is the path of the caterpillar (also known as the backbone of the caterpillar), and $\{z_1,z_2,\ldots,z_n\}$ are {\em pendants} of the caterpillar.  The decision versions of STREE and the vertex
	cover problem are defined below:
	\\\textbf{The Steiner tree problem (STREE)}
	\\Instance: A graph $G$, a terminal set $R\subseteq V(G)$,
	a non-negative integer $k$.
	\\Question: Does there exist a Steiner set $S\subseteq V(G)$ such that $G[R\cup
	S]$ is connected and $|S| \leq k$?
	\\\textbf{The Vertex Cover problem (VC)}
	\\Instance: A graph $G$, a non-negative integer $k$.
	\\Question: Does there exist a vertex cover $S\subseteq V(G)$ such that for each
	edge $e=\{u,v\}\in E(G)$, $u\in S$ or $v\in S$ and $|S| \leq k$?
	\begin{theorem} \label{redn}
		STREE is NP-complete on 1-star caterpillar convex bipartite graphs.
	\end{theorem}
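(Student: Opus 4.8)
The plan is to prove NP-completeness in two parts: membership in NP, which is routine, and NP-hardness by a polynomial-time many-one reduction from Vertex Cover. For membership, given a guessed Steiner set $S$ one verifies in linear time that $|S|\le k$ and that $G[R\cup S]$ is connected (a single BFS/DFS), so STREE lies in NP even restricted to this class.

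For hardness I would reduce from an instance $(G,k)$ of Vertex Cover with $V(G)=\{v_1,\dots,v_n\}$ and $E(G)=\{e_1,\dots,e_m\}$. The construction I propose builds a bipartite graph $G'=(X,Y)$ whose imaginary structure on $X$ is a $1$-star caterpillar. I put a hub backbone vertex $t_0$ plus one backbone vertex $t_e$ for each edge $e$, ordered $t_0,t_{e_1},\dots,t_{e_m}$, and give each backbone vertex its unique pendant ($z_0$ for $t_0$, and $z_e$ for $t_e$). On the $Y$ side I create one selector $a_i$ for each vertex $v_i$, and let $N(a_i)$ consist of the backbone prefix from $t_0$ up to the largest-indexed edge incident with $v_i$, together with the hub pendant $z_0$ and the pendants $z_e$ of exactly the edges incident with $v_i$. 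Finally I set $R=\{t_0\}\cup\{z_e : e\in E\}$ and $k'=k$, and note the instance has size $O(n+m)$ with $O(nm)$ edges, so it is polynomial.

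The crucial design feature, and the step I expect to be the main obstacle, is arranging the neighborhoods so that $G'$ is genuinely $1$-star caterpillar convex (each $N(a_i)$ must induce a connected subtree of the caterpillar) while still localizing every edge to its two endpoints. Because $N(a_i)$ is a backbone prefix with pendants hung on vertices lying inside that prefix, each $N(a_i)$ is connected, so the caterpillar-convexity requirement is met; yet each edge-pendant $z_e$ is, by construction, adjacent in $G'$ to exactly the two selectors $a_i,a_j$ with $e=\{v_i,v_j\}$. This private pendant is exactly the extra power a $1$-star caterpillar has over a $0$-star one (a plain path), and it is what makes the reduction go through: had $z_e$ been replaced by its shared backbone vertex $t_e$, connecting it would reduce to a polynomially solvable interval point-cover rather than to Vertex Cover, consistent with the paper's $P$ result for convex bipartite graphs.

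Given the construction, the equivalence is short. For the forward direction, from a vertex cover $C$ take $S=\{a_i : v_i\in C\}$; every selector is adjacent to the terminal hub $t_0$, so all chosen selectors lie in one component, and every terminal pendant $z_e$ is adjacent to a selected endpoint-selector since $C$ covers $e$, whence $G'[R\cup S]$ is connected and $|S|=|C|\le k$. For the converse, given any Steiner set $S$ with $|S|\le k'$, each terminal $z_e$ has only $a_i$ and $a_j$ as neighbors in $G'$, and neither selector is a terminal, so connectedness forces at least one of them into $S$; thus $\{v_i : a_i\in S\}$ is a vertex cover of size at most $|S|\le k$. This yields that the minimum Steiner set size equals the minimum vertex cover size and completes the reduction.
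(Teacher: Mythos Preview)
Your proof is correct and follows essentially the same approach as the paper: both establish NP-hardness by a polynomial reduction from Vertex Cover in which each original vertex becomes a selector in $Y$, each edge is encoded by a caterpillar pendant whose only $G'$-neighbours are the two endpoint-selectors, and a single backbone terminal serves as the common anchor guaranteeing connectivity. The paper's gadget differs only cosmetically (it uses two pendants and two backbone vertices per edge and makes every selector adjacent to the entire backbone, whereas you use one of each and a backbone prefix), but the key mechanism and the correctness argument are identical.
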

	\begin{proof}
		\textbf{STREE is in NP:} Given an input instance $(G,R,k)$ of STREE and a certificate set $S\subseteq V(G)$, whether $S$ is a Steiner set of cardinality at most $k$ can be verified in polynomial time as the connectedness of $G[R\cup S]$ can be verified in polynomial time by using standard graph traversal algorithms \cite{cormen2009introduction}.
		\\\textbf{STREE is NP-hard:} It is known from \cite{cormen2009introduction} that VC on general graphs is NP-complete and this can be reduced in polynomial time to STREE in 1-star caterpillar convex bipartite graphs using the following reduction algorithm.   We map an instance $(G,k)$ of VC on general graphs to the corresponding instance $(G^*,R,k'=k)$ of STREE as follows: $V(G^*)=V_1\cup V_2\cup V_3$, \\
		$V_1=\{x_i~|~v_i\in V(G)\}$, \\ $V_2=\{y_{i1},y_{i2}~|~e_i\in E(G)\}$, \\
		$V_3=\{z_{i1},z_{i2}~|~e_i\in E(G)\}$. \\  We shall now describe the edges of
		$G^*$, \\
		$E(G^*)=E_1 \cup E_2, \\
E_1= \{\{y_{i1},x_k\},\{y_{i1},x_l\},\{y_{i2},x_k\},\{y_{i2},x_l\},~|~e_i=\{v_k,v_l\}\in
		E(G),~x_k,x_l\in V_1,~y_{i1},y_{i2}\in V_2,~1\leq i \leq m,~1\leq k \leq
		n,~1\leq l \leq n\} \\
E_2=\{\{x,z_{i1}\}, \{x,z_{i2}\}~|~x\in
		V_1,~z_{i1},z_{i2}\in V_2 ,~1\leq i \leq m\}$.
		\\We define $X^*=V_2\cup V_3$, $Y^*=V_1$, and imaginary 1-star caterpillar $T$ on $X^*$ is defined with $V_3$ as the backbone and $V_2$ as the pendant vertex set.  That is, $V(T)=X^*$ and $E(T)=\{\{y_{11},y_{12}\}, \{y_{12},y_{21}\}, \{y_{21},y_{22}\}, \ldots, \{y_{m1},y_{m2}\}\} \cup \{\{y_{i1},z_{i1}\}, \{y_{i2},z_{i2}\}  ~|~ 1 \leq i \leq m \}$. 
 \\ An example is illustrated in Figure \ref{comb},  the vertex cover instance $G(V,E)$ with $k=2$ is mapped to STREE instance of 1-star caterpillar convex bipartite graph $G^*(V^*,E^*)$ with $R=\{y_{11},y_{12},y_{21},y_{22},y_{31},y_{32},z_{11}\}$, $k'=2$. 
		\begin{figure}[H]
			\begin{center}
				\includegraphics[scale=0.95]{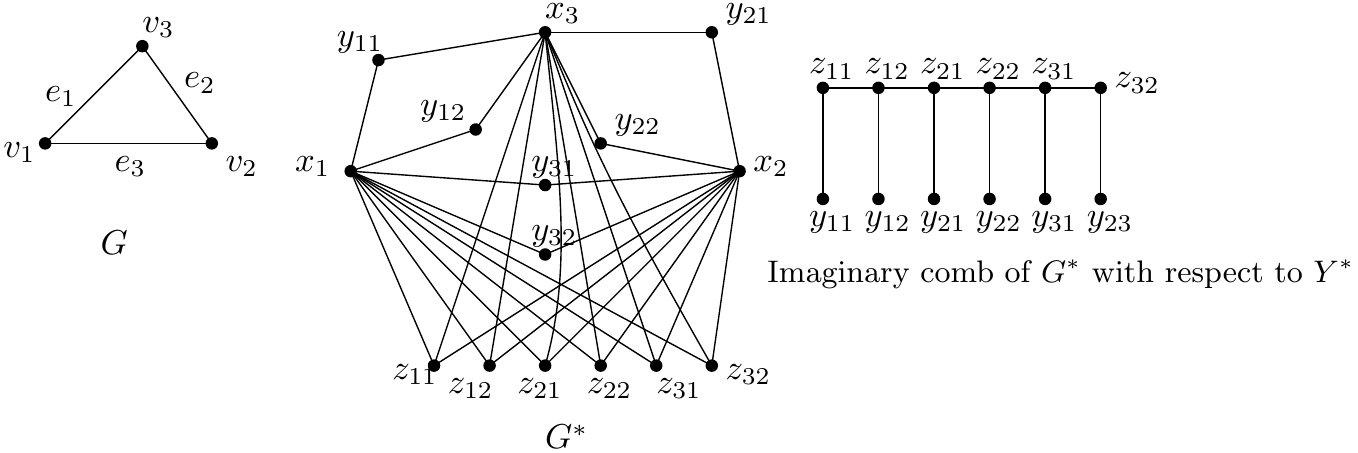}
				\caption{An example: VC reduces to STREE.}\label{comb}
			\end{center}
		\end{figure}
		\begin{claim}
			$G^*$ is a 1-star caterpillar convex bipartite graph.
		\end{claim}
		\begin{proof}
			By construction, $T$ is a 1-star caterpillar on $X^*$.   Each $x_i \in V_1$ (or $x_i \in Y^*$ ) is adjacent to all vertices in $V_3$ and also to each element in some subset $Y' \subseteq V_2$.    Therefore, for each $x_i$,  $N(x_i)$ is a subtree in $T$.  Hence $G^*$ is a 1-star caterpillar convex bipartite graph.
		\end{proof}
		\begin{claim}
			$(G,k)$ has a vertex cover with at most $k$ vertices if and only if
			$(G^*,R=\{y_{i1},y_{i2}~|~1\leq i\leq m\}\cup \{z_{11}\},k'=k)$ has a Steiner tree of size at most $k'=k$ Steiner vertices.
		\end{claim}
		\begin{proof}
			\emph{(Only if)} Let $V'=\{v_i~|~1\leq i \leq k\}$ is a vertex cover of size $k$ in $G$.   Then we construct the Steiner set $S$ of $G^*$ for $R=\{y_{i1},y_{i2}~|~1\leq i\leq m\}\cup \{z_{11}\}$ as follows $S=\{x_i ~|~ 1 \leq i \leq k, ~v_i \in V', ~x_i \in V(G^*)\}$.
			Indeed, for any edge $e_i=\{v_k,v_l\}\in E(G)$, $v_k$ or $v_l$ in $V'$.   Then by our construction, we know that $y_{i1}$ and $y_{i2}$ are adjacent to $x_k$ and $x_l$, and  $x_k$ or $x_l$ is in $S$.  So each vertex in $\{y_{i1},y_{i2}~|~1\leq i \leq m\}$ is adjacent to at least one vertex in $S$.   Further,  by our construction, each vertex in $V_1$ is adjacent to each vertex in $V_3$.  Hence $S \cup R$ induces a connected subgraph in $G^*$.
			\\\emph{(If)} For $R$ in $G^*$, let $S=\{x_i~|~1\leq i \leq k'\}$ is a Steiner set of $G^*$ of size $k'$.  Then, we construct the vertex cover $V'$ of size $k$ in $G$ as follows; $V'=\{v_i~|~x_i \in S,~v_i\in V(G),~1\leq i\leq k'\}$.   We now claim that $V'$ is a vertex cover in $G$.   Suppose that there is an edge $e_i=\{v_k,v_l\}\in E(G)$ for which neither $v_k$ nor $v_l$ is in $V'$.   This implies that neither $x_k$ nor $x_l$ is in $S$.   Since $R$ contains $y_{i1},y_{i2}$, it follows that $N(y_{i1}) \cap S=\emptyset$ and $N(y_{i2})\cap S=\emptyset$.   Further, $S$ is not a Steiner set.  A contradiction.   Thus $V'$ is a vertex cover of size $k$ in $G$.\qed
		\end{proof}
	Therfore, we conclude STREE on $1$-star caterpillar convex bipartite graphs is NP-complete.	\qed
	\end{proof}
	\begin{corollary}
		STREE is NP-complete on $k$-star caterpillar convex bipartite graphs, $k \geq 1$.    Further, STREE is NP-complete on tree convex bipartite graphs.
	\end{corollary}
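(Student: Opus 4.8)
The plan is to reduce from the NP-complete problem established in Theorem~\ref{redn}, namely STREE on $1$-star caterpillar convex bipartite graphs. Membership in NP needs no new work: it is identical to the argument in Theorem~\ref{redn}, since for any graph (and hence for any graph in these subclasses) a candidate Steiner set $S$ of size at most $k$ can be checked in polynomial time by verifying connectedness of $G[R\cup S]$ with a standard traversal \cite{cormen2009introduction}. So the entire task is to transfer NP-hardness to the two wider classes.

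For the first part (hardness on $k$-star caterpillar convex bipartite graphs for every fixed $k\ge 1$), I would revisit the instance $G^*$ produced by the reduction of Theorem~\ref{redn}. In $G^*$ the imaginary $1$-star caterpillar has backbone $V_3$ and a single pendant in $V_2$ per backbone vertex, and, crucially, every backbone vertex of $V_3$ lies in $N(x)$ for each $x\in V_1=Y^*$. Given $k\ge 1$, I would build $G^{**}$ from $G^*$ by attaching $k-1$ fresh \emph{dummy} pendant vertices to every backbone vertex (so each backbone vertex acquires exactly $k$ pendants: the original one together with $k-1$ dummies) and making each dummy adjacent to every $x\in V_1$. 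By construction the imaginary tree on the enlarged vertex set of $X^*$ is now exactly a $k$-star caterpillar. The key point to verify is the subtree property: for each $x\in Y^*=V_1$, the set $N(x)$ consists of the entire backbone together with a subset of the original pendants and all the dummy pendants; since the whole backbone is present and every included pendant attaches to a present backbone vertex, $N(x)$ induces a connected subtree. Hence $G^{**}$ is a $k$-star caterpillar convex bipartite graph, and it is connected because each dummy is joined to $V_1$, which is itself internally connected through the backbone.

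The remaining step is to argue that appending the dummies leaves the reduction correct, keeping the terminal set $R=\{y_{i1},y_{i2}\mid 1\le i\le m\}\cup\{z_{11}\}$ and budget $k'=k$ unchanged. The forward direction is unaffected, since a vertex cover $V'$ still yields the same Steiner set $\{x_i\mid v_i\in V'\}$ and the dummies, being non-terminals, need not be connected. For the reverse direction, the essential observation is that each terminal $y_{ij}$ still has $N_{G^{**}}(y_{ij})=\{x_k,x_l\}$ exactly as before (dummies are adjacent only to $V_1$, never to $V_2$), so connecting every $y_{ij}$ still forces at least one endpoint of $e_i$ into any Steiner set $S$; thus $\{v_i\mid x_i\in S\cap V_1\}$ is a vertex cover of size at most $|S|\le k$. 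This is the one place that requires care, and it is the main obstacle: one must rule out that the newly added dummies act as cheaper connectors that reduce the Steiner set below the vertex-cover number. They cannot, because they attach only to $V_1$ and offer no alternative route to the pendant terminals $y_{ij}$, whose only neighbours remain the two edge-endpoints. This establishes NP-completeness for every fixed $k\ge 1$.

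For the second part, class containment does the work. Every $k$-star caterpillar, and in particular the $1$-star caterpillar used in Theorem~\ref{redn}, is a tree; hence every $1$-star caterpillar convex bipartite graph is a tree convex bipartite graph. The hard instances $G^*$ of Theorem~\ref{redn} therefore already lie in the class of tree convex bipartite graphs, so NP-hardness transfers verbatim, and together with NP membership we conclude that STREE is NP-complete on tree convex bipartite graphs.
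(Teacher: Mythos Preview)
Your argument is correct, but it takes a considerably more elaborate route than the paper does. The paper's entire proof is a one-line class containment: $1$-star caterpillar convex bipartite graphs are a special case of $k$-star caterpillar convex bipartite graphs for every $k\ge 1$, and these in turn are tree convex bipartite, so the NP-hardness from Theorem~\ref{redn} transfers immediately to both wider classes. You instead re-engineer the reduction, padding each backbone vertex of the imaginary caterpillar with $k-1$ fresh dummy pendants (all made universal to $V_1$) so that the resulting $G^{**}$ genuinely admits a $k$-star caterpillar, and then you re-verify both directions of the vertex-cover equivalence.

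What your approach buys is rigour at the level of the stated definition: the paper defines a $k$-star caterpillar as having \emph{exactly} $k$ pendants per backbone vertex, and under that reading it is not literally true that every $1$-star caterpillar convex bipartite graph is a $k$-star caterpillar convex bipartite graph (for instance, if $|X|$ is not a multiple of $k+1$ no $k$-star caterpillar on $X$ exists at all). Your explicit padding sidesteps this by enlarging $X^*$ rather than reinterpreting the same $X^*$, and your check that the only neighbours of each $y_{ij}$ remain $\{x_k,x_l\}$ is exactly what is needed to keep the reverse direction intact. The paper's approach is shorter and morally correct, but relies on reading the class inclusion loosely; yours is longer but self-contained. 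For the tree-convex part the two arguments coincide.

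One small point: you use $k$ both for the star parameter and for the vertex-cover budget inherited from Theorem~\ref{redn}. It would be cleaner to rename one of them to avoid the collision.
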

	\begin{proof}
	Since the class of $1$-star caterpillar convex bipartite graphs is a special case of $k$-star caterpillar convex bipartite graphs, $k \geq 1$, and the fact $k$-star caterpillar convex bipartite graphs are a subclass of tree convex bipartite graphs, this result follows from Theorem \ref{redn}.
	\end{proof}
\noindent Remark: In \cite{muller1987np}, it is shown that STREE in chordal bipartite graphs is NP-complete, and it is important to highlight that it is a 3-star caterpillar convex bipartite graph.  Hence STREE is NP-complete for 3-star caterpillar convex bipartite graph.
\\In Theorem \ref{redn}, we strengthen the result of \cite{muller1987np} by establishing the NP-complete result for 1-star caterpillar convex bipartite graphs.\\
We shall next present two applications of our result. We use STREE in convex bipartite graphs as a framework and solve (a) STREE in intervals graphs, and (b) Domination in convex bipartite graphs. To the best of our knowledge STREE in interval graphs is open, and the study of domination in convex bipartite graphs is already reported in \cite{damaschke1990domination}. 	
\section{An Application: STREE in Interval graphs}
It is known from \cite{white1985steiner} that STREE on chordal graphs is NP-complete.   The class of interval graphs is a popular subclass of chordal graphs on which STREE is open.  In this paper, we present a polynomial-time algorithm for STREE in interval graphs using STREE in convex bipartite graphs as a black box.   In particular, we invoke STREE in convex bipartite graphs with $R=Y$ algorithm to solve STREE in interval graphs.   It is important to highlight that the interval representation used in Section \ref{} for $R=Y$ is on the integer line. \\
A graph $G$ is an interval graph if there exists a $1$-$1$ correspondence between its vertices and a set of intervals on the real line such that two vertices are adjacent if and only if the corresponding intervals have a nonempty intersection \cite{booth1975linear}.     For an interval graph $G$ with $V(G)=\{v_1,v_2,\ldots,v_n\}$ and $E(G)=\{e_1,\ldots,e_m\}$, we associate an interval $I_i$ for each $v_i \in V(G)$.  In this paper, we consider interval graphs such that in the underlying interval representation $I$, for each interval $I_i$, its left endpoint $l_i$ and right endpoint $r_i$ are integers.   Note that such an interval representation always exists for any interval graph.   \\
Given an interval graph $G$ with the interval representation $I$, our reduction algorithm constructs the corresponding convex bipartite graph $G^*(X^*,Y^*)$ as follows: $V(G^*)=V_1 \cup V_2$, $V_1=\{y_i ~|~ I_i \mbox{ is an interval in } I \}$, $V_2=\{x_i ~|~ l_i \mbox { for some } I_i \} \cup \{x_j ~|~ r_j \mbox { for some } I_j \}$.   In $G^*$, there is a vertex $y_i$ for each interval $I_i$ and there is a $x$ vertex corresponding to $l_i$ and there is a $x$ vertex corresponding to $r_i$.    $E(G^*)=\{ \{y_i,x_j\} ~|~ y_i \in V_1, x_j \in V_2, x_j \mbox {is in the interval } I_i \}$.  That is, $y_i$ is adjacent to all $x$ vertices that lie in the interval corresponding to $y_i$.   Let $X^*=V_2$ and $Y^*=V_1$.   Clearly, $G^*$ is a convex bipartite graph with convexity on $X^*$.    To solve STREE in interval graphs for the instance $(G,R)$, we solve STREE in convex bipartite graphs for the corresponding instance $(G^*,R^*)$.  
In particular, we map $(G,R=\{z_1,\ldots,z_k\})$ to $(G^*, R^*=\{y_1,\ldots,y_k\})$ such that $y_i \in Y^*$.   Clearly, $R^* \subseteq Y^*$.   Using the dynamic programming presented in Section \ref{rsuby}, we solve $(G^*,R^*)$, and let $S^*$ be the solution to $G^*$.   Let $S=\{v_i ~|~y_i \in S^*\cap Y^*, v_i \in V(G)\}$.   We now claim that $S$ is a Steiner set in $G$ for $R$.   Suppose not, then there exist $v_i, v_j \in R$ such that there is no path between $v_i$ and $v_j$ in the graph induced on $R \cup S$.   This implies that there is no path between $y_i$ and $y_j$ in the graph induced on $R^* \cup S^*$, contradicting the fact that $S^*$ is a Steiner set.   Thus $S$ is a Steiner set in $G$.   Further, $S$ is a minimum Steiner set in $G$. 
	\section{Another Application: Domination in convex bipartite graphs}
	It is known from \cite{damaschke1990domination} that the minimum domination set problem in convex bipartite graphs is polynomial-time solvable.  In this section, we propose an approach which uses STREE in convex bipartite graphs as a black box, and this approach is different from the one reported in \cite{damaschke1990domination}.   Further, we obtain a solution to the domination in convex bipartite graphs in $O(nm)$ time.   Our reduction algorithm takes an instance of domination problem in convex bipartite graphs and maps to the corresponding instance of STREE in convex bipartite graphs.   For $G(X,Y)$ of domination problem, we invoke (i) STREE on $G$ with $R=X$, and (ii) STREE on $G$ with $R=Y$.   Let $D_1$ and $D_2$ be the minimum set of Steiner vertices output by the algorithm when invoked on $(G,R=X)$, and $(G,R=Y)$, respectively.    
	\begin{theorem}
		  $D=D_1\cup D_2$ is a minimum dominating set. 
	\end{theorem}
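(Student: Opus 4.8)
The plan is to establish the statement in two stages: first that $D=D_1\cup D_2$ is a dominating set of $G$, and then that no dominating set can be strictly smaller. Throughout I would use that $D_1\subseteq Y$ and $D_2\subseteq X$ are the optimal Steiner sets returned for $R=X$ and $R=Y$, whose minimality is guaranteed by Theorem~\ref{algo1correctness} and by the correctness of Algorithm~\ref{algo3}; note also that $D_1\cap D_2=\emptyset$ since $D_1\subseteq Y$ and $D_2\subseteq X$, so $|D|=|D_1|+|D_2|$.

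For the domination stage I would argue directly from the Steiner conditions. Since $G[X\cup D_1]$ is connected with $D_1\subseteq Y$ and $G$ bipartite, every $x\in X$ (assuming $|X|\ge 2$) must have a neighbour in $D_1$, for otherwise $x$ would be isolated in $G[X\cup D_1]$, contradicting connectedness; hence $D_1$ dominates all of $X$. Symmetrically $D_2$ dominates all of $Y$. The vertices of $D$ dominate themselves, so $D$ dominates $X\cup Y=V(G)$. The degenerate cases $|X|=1$ or $|Y|=1$ I would dispatch separately, since then one Steiner set is empty and the single vertex of that side is forced into the opposite Steiner set, which still dominates everything. This stage is routine.

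For minimality, the plan is to take an arbitrary dominating set $D'$, split it as $D'_X=D'\cap X$ and $D'_Y=D'\cap Y$, and show $|D'|\ge|D_1|+|D_2|$. Domination forces $D'_Y$ to cover $X\setminus D'_X$ and $D'_X$ to cover $Y\setminus D'_Y$, and I would attempt to massage $D'$ side by side, using convexity on $X$ in the spirit of the cut-and-paste arguments of Theorems~\ref{algo1correctness} and~\ref{algo2correctness}, into a feasible solution of each Steiner subproblem and then invoke the optimality of $D_1$ and $D_2$. I expect this to be the main obstacle: the Steiner sets are constrained to be \emph{connected} covers, in that $X\cup D_1$ and $Y\cup D_2$ must be connected, whereas a dominating set only needs to \emph{cover} the opposite side and may freely mix $X$- and $Y$-vertices to dominate both sides at once. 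Bridging this gap is precisely where the convex structure of $G$ must be exploited, and where the delicate part of the argument lies; the lower bounds $|D'_Y|\ge|D_1|$ and $|D'_X|\ge|D_2|$ do not follow from the covering conditions alone, so care is needed to rule out a smaller mixed dominating set that economises on connectivity.
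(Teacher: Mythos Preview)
Your two-stage plan mirrors the paper's argument exactly: the paper also splits a hypothetical smaller dominating set $D'$ into $D'\cap X$ and $D'\cap Y$, then asserts (without further justification) that $D'\cap Y$ is a Steiner set for $R=X$ and $D'\cap X$ is a Steiner set for $R=Y$, deriving a contradiction from the minimality of $D_1,D_2$. The step you single out as ``the main obstacle'' is precisely the step the paper asserts flatly.

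You are right to be uneasy, and in fact the obstacle is fatal: the theorem as stated is false, so neither your argument nor the paper's can be completed. Take $G=P_5$ with $X=\{x_1,x_2,x_3\}$, $Y=\{y_1,y_2\}$, $N(y_1)=\{x_1,x_2\}$, $N(y_2)=\{x_2,x_3\}$; this is convex bipartite. Running the two Steiner algorithms gives $D_1=\{y_1,y_2\}$ and $D_2=\{x_2\}$, so $|D|=3$, yet $\{y_1,y_2\}$ is already a dominating set of size $2$. Here $D'\cap X=\emptyset$ is certainly \emph{not} a Steiner set for $R=Y$ (the induced graph $G[Y]$ is edgeless), which is exactly the failure mode you anticipated: a dominating set may put all its weight on one side and pay nothing for connectivity on the other. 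Convexity does not rescue this, since $P_5$ is as convex as one could wish. So the ``delicate part'' you planned to handle with cut-and-paste cannot be handled at all; the claimed lower bound $|D'\cap X|\ge|D_2|$ simply fails, and the paper's proof has the same gap.
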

	\begin{proof}
		Suppose that $D$ is not a minimum dominating set, then there exists a minimum
		dominating set $D'$ such that $|D'|<|D|$.  This implies that $|D'\cap Y|<|D_1|$
		or $|D'\cap X|<|D_2|$.   Further, $D' \cap Y$ is a Steiner set for the case $R=X$ and $D' \cap X$ is a Steiner set for the case $R=Y$.   It contradicts the fact that $D_1$ and $D_2$ are minimum Steiner sets.  Therefore, $D$ is a minimum dominating set.
	\end{proof}
	Remark:  It is shown in \cite{chen2016complexity} that the dominating set problem on comb-convex bipartite graphs is NP-complete.   Since comb-convex bipartite graphs are precisely 1-star caterpillar convex bipartite graphs, the dominating set problem on 1-star caterpillar convex bipartite graphs is NP-complete.  Thus we obtain a dichotomy for the domination in $k$-star caterpillar convex bipartite graphs similar to STREE.
	\\It is important to highlight that Domination on chordal bipartite graphs is NP-complete \cite{muller1987np}.  A micro-level analysis of the reduction instances shows that the instances are a variant of 3-star caterpillar convex bipartite graph: exactly one of the 3-stars is such that one branch is $P_1$ (path of length one) and the other two branches are $P_2$ (path of length two).  In this paper, we strengthen the result of \cite{muller1987np} and show that on 1-star caterpillar convex graphs, the Domination is NP-complete.
	\\\\
	{\bf Conclusions and Directions for Further Research:} 
In this paper, we present an interesting dichotomy: we show that STREE on 0-star caterpillar convex bipartite graphs (convex bipartite graphs) are polynomial-time solvable, whereas STREE on 1-star caterpillar convex bipartite graphs is NP-complete.  Further we show that STREE in interval graphs and Domination in convex bipartite graphs are polynomial-time solvable by using the STREE algorithm for convex bipartite graphs.   Our greedy strategies and dynamic programming based solution exploits the structure
	of convex bipartite graphs which can be used in the study of other combinatorial
	problems such as Steiner path, variants of dominating set, variants of
	Hamiltonicity.    Also, P vs NPC boundary investigation for other combinatorial problems in generalization of convex bipartite graphs would be an interesting direction to explore with.
	\\\\\\
	\bibliographystyle{unsrt}
	\bibliography{references}
\end{document}